\theoremstyle{definition}
\newtheorem{prop}{Proposition}
\newcommand{\yujie}[1]{\color{blue}{#1}\color{black}}
\title{ Rapid Detection of Hot-spot by  Tensor Decomposition on Space and Circular Time with  Application to Weekly Gonorrhea data }
\author{Yujie Zhao, Hao Yan, Sarah Holte, Yajun Mei}
\begin{document}
\maketitle

\begin{abstract}
	In many bio-surveillance and healthcare applications, data sources are measured from many spatial locations repeatedly over time, say, daily/weekly/monthly. In these applications, we are typically interested in detecting hot-spots, which are defined as some structured outliers that are sparse over the spatial domain but persistent over time. In this paper, we propose a tensor decomposition method to detect when and where the hot-spots occur. Our proposed methods represent the observed raw data as a three-dimensional tensor including a circular time dimension for daily/weekly/monthly patterns, and then decompose the tensor into three components: smooth global trend, local hot-spots, and residuals. A combination of LASSO and fused LASSO is used to estimate the model parameters, and a CUSUM procedure is applied to detect when and where the hot-spots might occur. The usefulness of our proposed methodology is validated through numerical simulation and a real-world dataset in the weekly number of gonorrhea cases from $2006$ to $2018$ for $50$  states in the United States.
\end{abstract}

\textbf{Key words:} Circular time, CUSUM, hot-spot, spatio-temporal model, tensor decomposition

\section{Introduction}

In many bio-surveillance and healthcare applications, data sources are measured from many spatial locations repeatedly over time, say, daily, weekly, or monthly. In these applications, we are typically interested in detecting {\it hot-spots}, which are defined as some structured outliers that are sparse over the spatial domain but persistent over time.
A concrete real-world motivating application  is the weekly number of gonorrhea cases from $2006$ to $2018$ for $50$  states in the United States, also see the detailed data description in the next section. From the monitoring viewpoint, there are two kinds of changes: one is the global-level trend, and the other is the local-level outliers. Here we are more interested in detecting the so-called hot-spots, which are local-level outliers with the following two properties: (1) spatial sparsity, i.e., the local changes are sparse over the spatial domain; and (2) temporal persistence, i.e., the local changes last for a reasonably long time period unless one takes some actions.

Generally speaking, the hot-spot detection can be thought as detecting sparse anomaly in spatio-temporal data, and there are three different categories of methodologies and approaches in the literature.
The first one is LASSO-based control chart that integrates LASSO estimators for change point detection and declares non-zero components of the LASSO estimators as the hot-spot, see \cite{LASSO}, \cite{LassoBased1}, \cite{vsaltyte2011spatial}.
Unfortunately, the LASSO-based control chart lacks the ability to separate the local hot-spots from the global trend of the spatio-temporal data.
The second category of methods is the dimension reduction based control chart where one monitors the features from PCA or other dimension reduction methods, see \cite{PCA}, \cite{tensorPCA1},  \cite{tensorPCA2}.
The drawback of PCA or other dimension reduction methods is that it fails to detect sparse anomalies and cannot take full advantage of the spatial location of hot-spot.
The third category of anomaly detection methods is the decomposition-based method that uses the regularized regression methods to separate the hot-spots from the background event, see \cite{AnomalyInVideo}, \cite{AnomalyInImage}, \cite{SSD}.
However, these existing approaches investigate structured images or curves data under the assumption that  the hot-spots are independent over the time domain.

In this paper, we propose a decomposition-based anomaly  detection method for spatial-temporal data when the hot-spots are autoregressive, which is typical for time series data. Our main idea is to represent the raw data as a $3$-dimensional tensor: states, weeks, years.
To be more specific, at each year, we observe a $50 \times 52$ data matrix that corresponds to $50$ states and $52$ weeks (we ignore the leap years).
Next, we propose to decompose the $3$-dimension tensor into three components: Smooth global trend, Sparse local hot-spot, and Residuals,
and term our proposed decomposition model as SSR-Tensor. When fitting the observed raw data to our proposed SSR-Tensor model, we develop a  penalized likelihood approach by adding two penalty functions: one is the LASSO type penalty to guarantee the sparsity of hot-spots,
and the other is the fused-LASSO type penalty for the autoregressive properties of hot-spots or time-series data.
By doing so, we are able to (1) detect when the hot-spots occur (i.e., the change point detection problem); and
(2) localize where and which type of the hot-spots occur (i.e., the spatial localization problem).

We would like to acknowledge that much research has been done on modeling and prediction of the spatio-temporal data.
Some popular time series models are AR, MA, ARMA model, etc., and the parameter can be estimated by   Yule-Walker method \citep{hannan1979determination}, maximum likelihood estimation or  least square method \citep{hamilton1994time}.
In addition, spatial statistics have also been extensively investigated on its own right, see  \citep{early_defination_neighbor,ecology,lan2004landslide,elhorst2014spatial,lots-of-ST-regression} for examples.
When one combines time series with spatial statistics, the corresponding spatio-temporal models generally become more complicated, see  \citep{ZhuJun,lai2015asymptotically,ST-model-book} for more discussions.

\yujie{
In principle, it is possible to represent the spatio-temporal process as a sequence of random vector  ${\bf Y}_{t}$ with weekly observation $t$, where ${\bf Y}$ is $p$-dimensional  vector that characterize the spatial domain (i.e., spatial dimension $p=50$ in our case study). 
However, such an approach might not be computationally feasible in the context of hot-spot detection, in which one needs to specify the covariance structure of ${\bf Y}_{t}$, not only over the spatial domain, but also over the time domain. If we wrote all data into a vector, then the dimension of such vector is $50 \times 52 \times 13= 33,800$, and thus the covariance matrix is of dimension $33,800 \times 33,800,$ which is not computationally feasible. Meanwhile, under our proposed SSR-Tensor model, we essentially conduct a dimensional reduction by assuming that such a covariance matrix has a nice sparsity structure, as we reduce the dimensions $50, 52$ and $13$ to much smaller numbers, e.g., AR(1) model over the week or year dimension, and local correlation over the spatial domain.
}

It is useful to point out that while our paper focuses only on $3$-dimensional tensor due to our motivating application in gonorrhea, our proposed SSR-Tensor model can easily be extended to any $d$-dimensional tensor or data with $d \geq 3$, e.g., when we have further information, such as the unemployment rate, economic performance, and so on.  As the dimension  $d$ increases, we can simply add more corresponding bases, as our proposed model uses \textit{basis} to describe correlation within each dimension, and utilizes \textit{tensor product} for interaction between different dimensions. The capability of extending to high-dimensional data is one of the main advantages of our proposed SSR-Tensor model. Furthermore, our proposed SSR-Tensor model essentially involves block-wise diagonal covariation matrix, which allows ut to develop computationally efficient methodologies by using tensor decomposition algebra, see Section \ref{sec:computational_complexity} for more technical details.

The remainder of this paper is as follows. Section \ref{sec:data} discusses and visualizes the gonorrhea dataset, which is used as our motivating example and in our case study. Section \ref{sec:model_whole} presents our proposed SSR-Tensor model, and discusses how to estimate model parameters from observed data. Section \ref{sec:hot-spot_detection} describes how to use our proposed SSR-Tensor model to find hot-spots, both for temporal detection and for spatial localization. \yujie{Efficient numerical optimization algorithms are discussed in  Section \ref{sec:estimation}. }
Our proposed methods are then validated through extensive simulations in Section \ref{sec:simulation} and a case study in gonorrhea dataset in Section \ref{sec:case_study}.

\section{Data Description}
\label{sec:data}

\begin{figure}[t]
	\centering
	\begin{tabular}{ccc}
		\includegraphics[width = 0.3\textwidth]{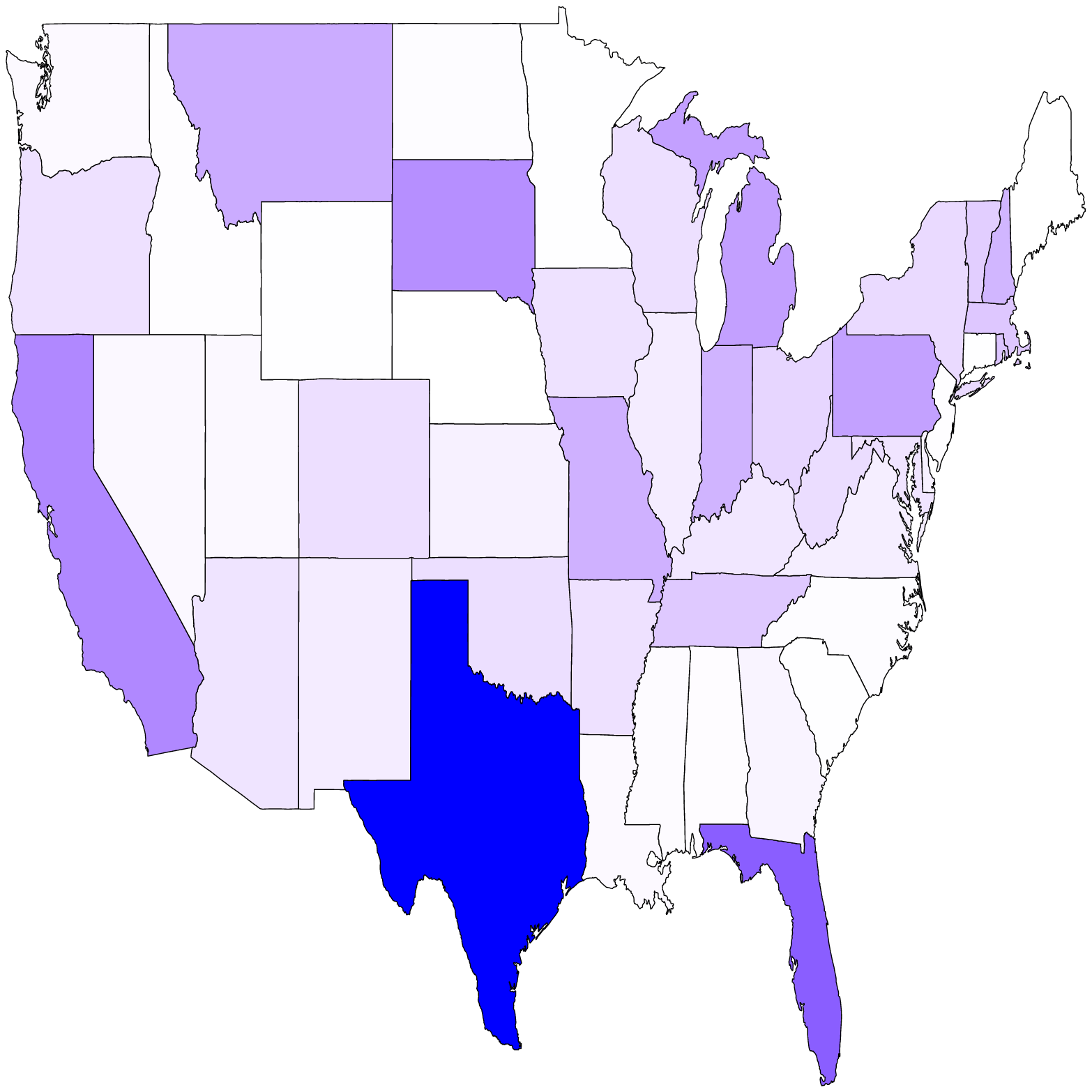}  &
		\includegraphics[width = 0.3\textwidth]{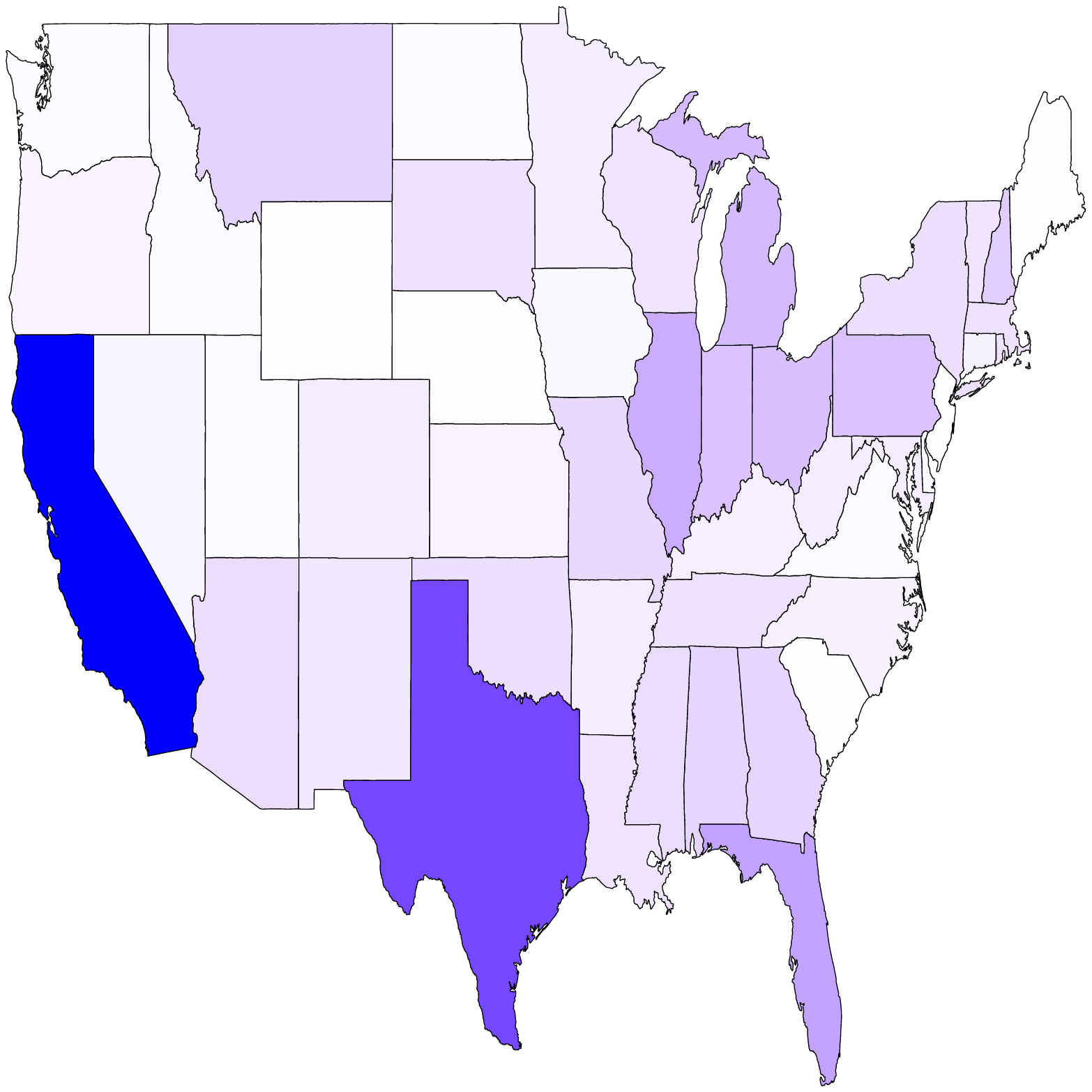}  &
		\includegraphics[width = 0.3\textwidth]{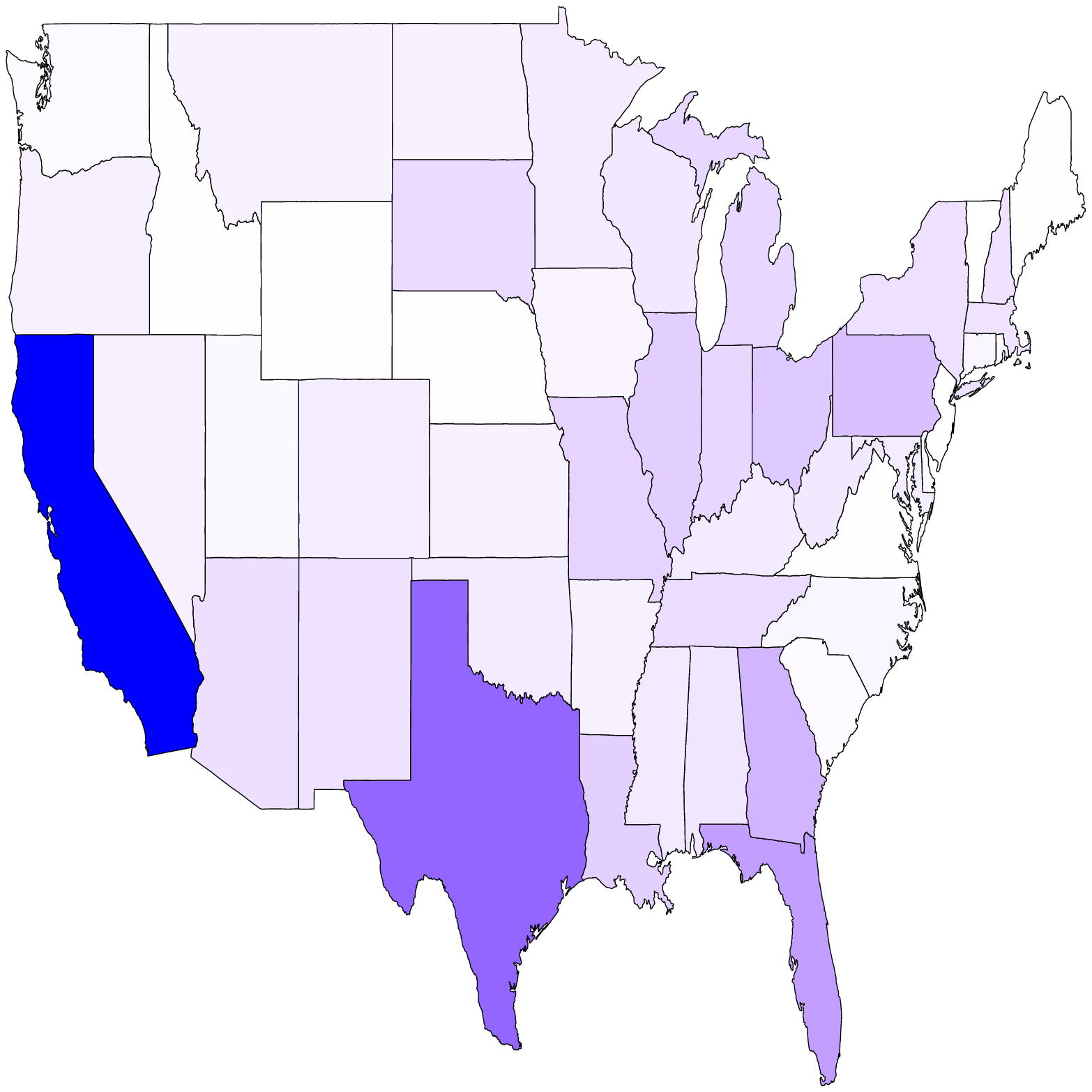}  \\
		week 1 & week 11  & week 21  \\
		\includegraphics[width = 0.3\textwidth]{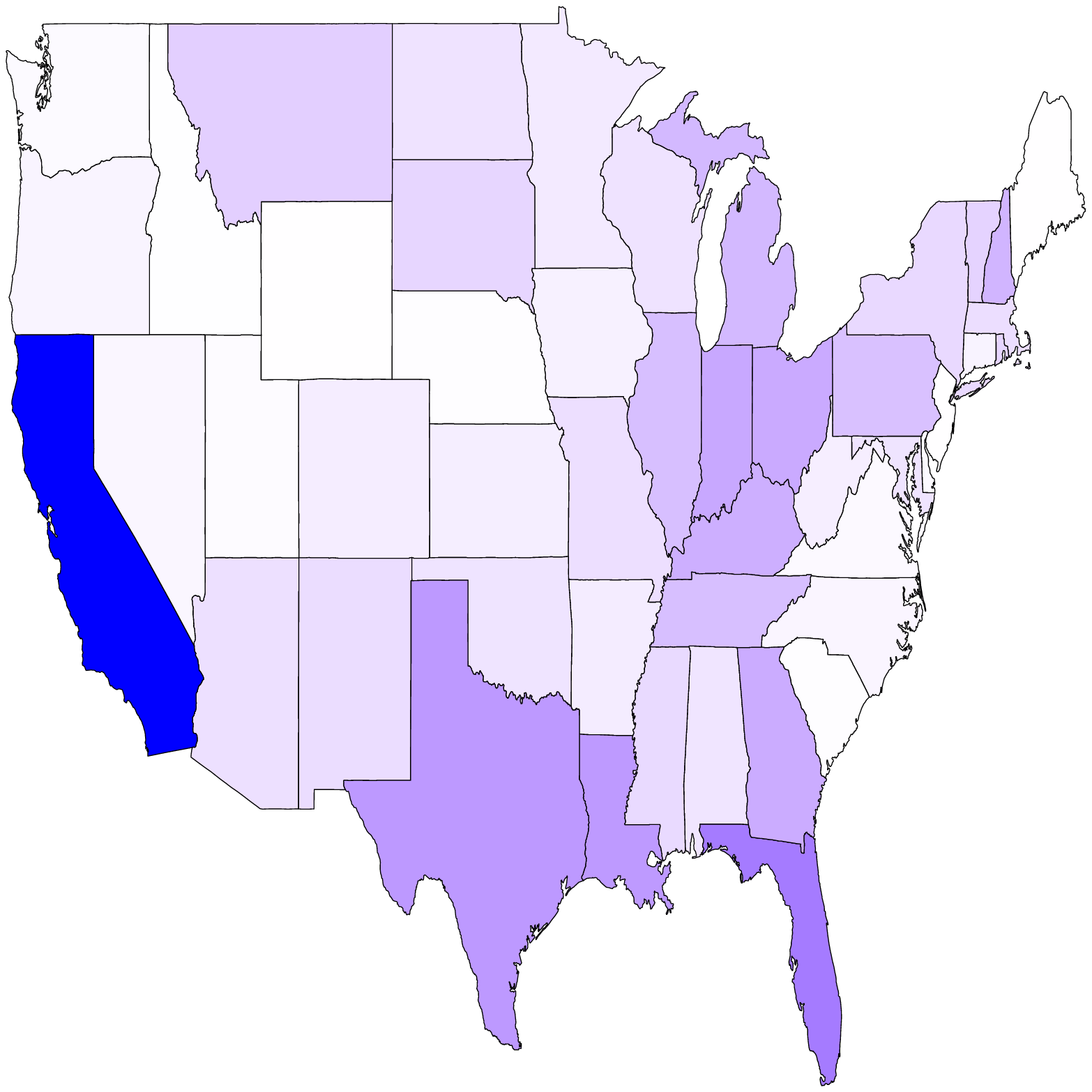}  &
		\includegraphics[width = 0.3\textwidth]{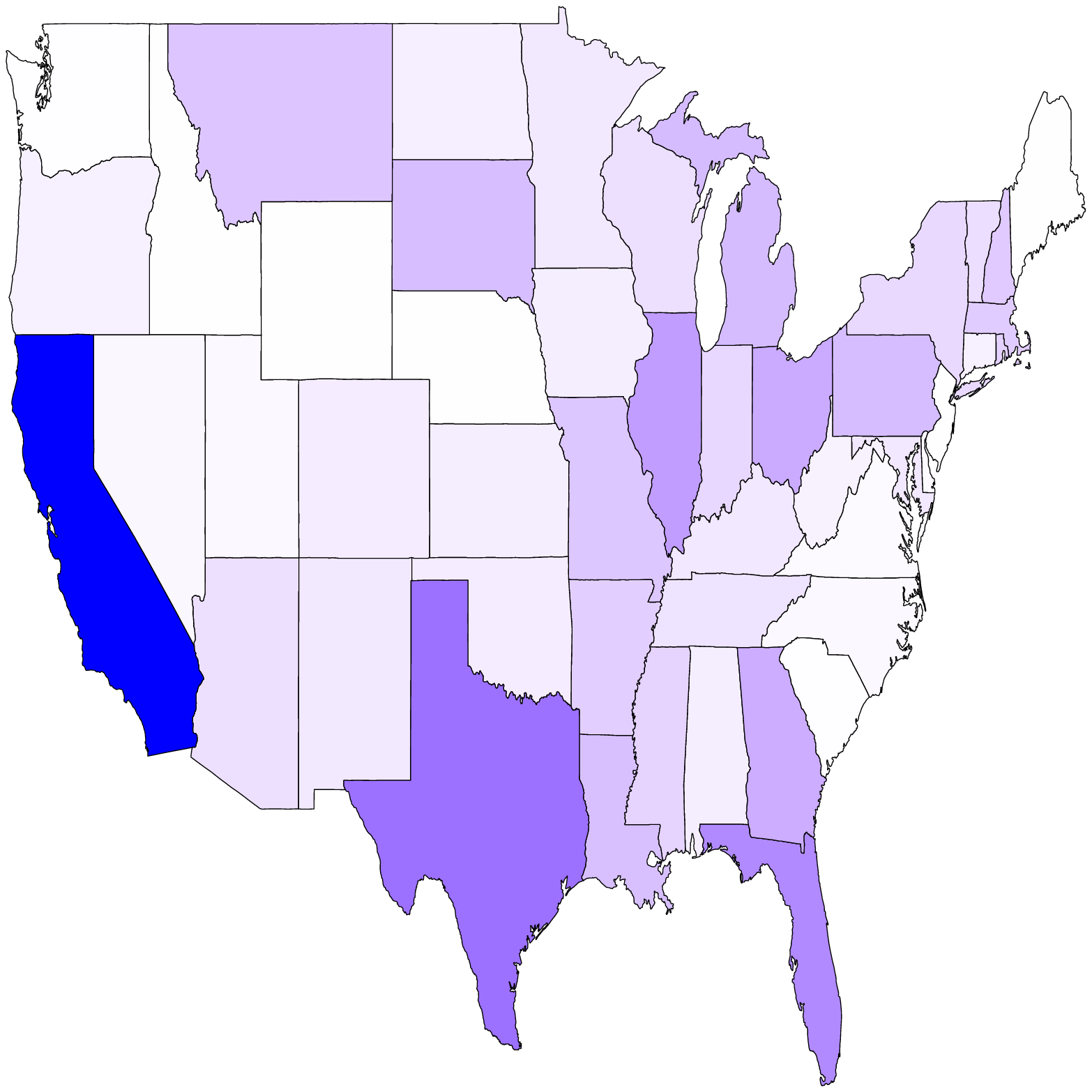}  &
		\includegraphics[width = 0.3\textwidth]{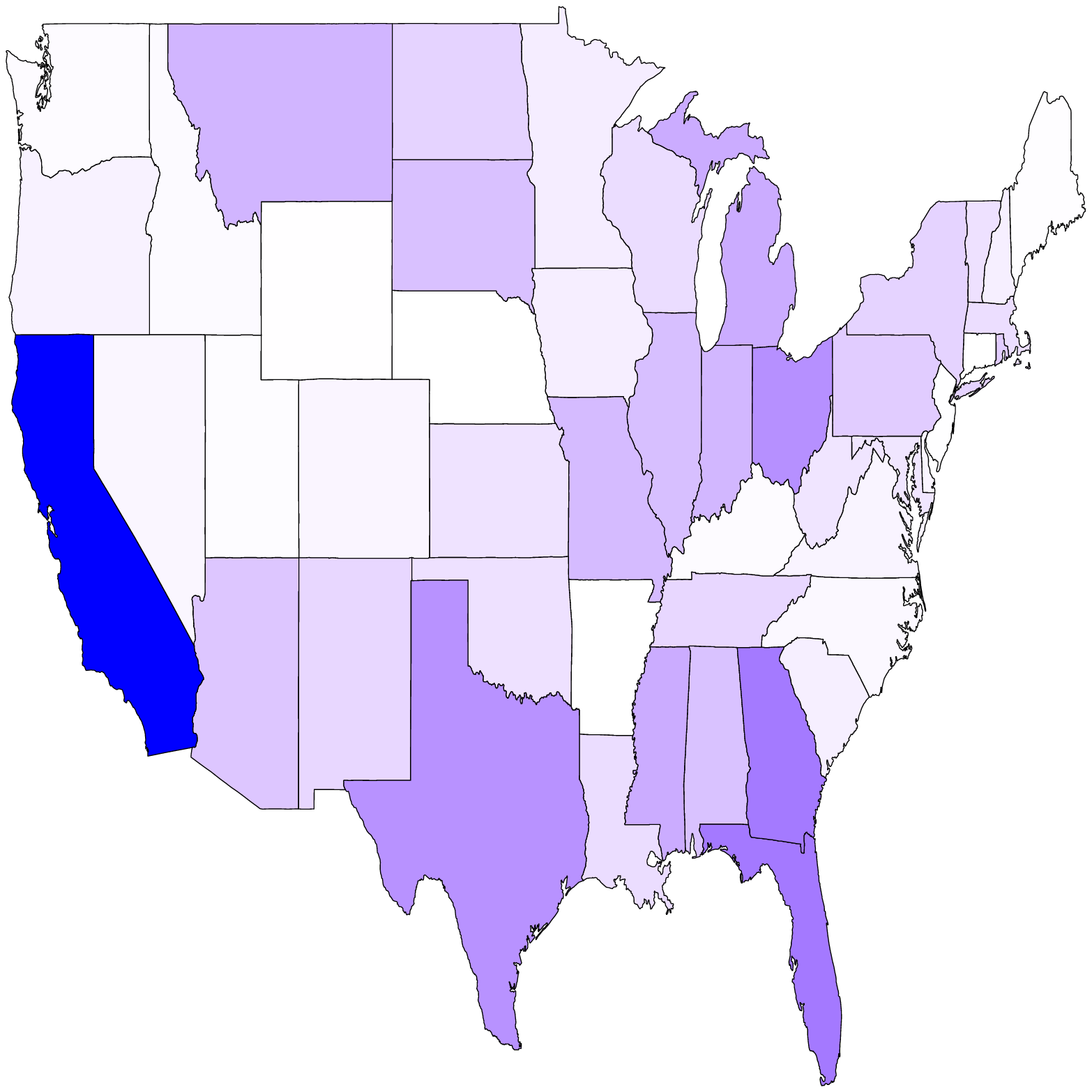}  \\
		week 31 & week 41  & week 51  \\
	\end{tabular}
	\caption{The cumulative number of gonorrhea cases at some selected weeks during years 2006-2018.
    \yujie{The deeper the color, the higher number of gonorrhea cases.}
    \label{fig:spatial_pattern_3}}
\end{figure}

To protect Americans from serious disease, the National Notifiable Disease Surveillance System (NNDSS) at the Centers for Disease Control and Prevention (CDC) helps public health monitor, control, and prevent about $120$ diseases, see its website \url{https://wwwn.cdc.gov/nndss/infectious-tables.html}.
One disease that receives intensive attention in recent years is gonorrhea, due to the possibility of multi-drug resistances.
Historically the instances of antibiotic resistance (in gonorrhea) have first been in the west and then move across the country.
\yujie{Since 1965, the CDC has collected the number of cumulative new infected patients every week in a calendar year. There are several changes on report policies or guidelines, and the latest one is year 2006. As a result, we focus on the weekly numbers of new gonorrhea patients during January 1, 2006 and December 31, 2018. The new weekly gonorrhea cases are computed as the difference of the cumulative cases in two consecutive weeks. The last week is dropped during this calculation.}

Let us first discuss the spatial patterns of the gonorrhea data  among 50 states.
For this purpose, we consider the cumulative number of gonorrhea cases from week 1 to week 52 by sum up all data during years 2006-2018. Figure
\ref{fig:spatial_pattern_3} plots some selected weeks (\#1, \#11, \#21, \#31, \#41, \#51).
\yujie
{
In Figure 1, if the state has a deeper and bluer color, then it experiences a higher number of gonorrhea cases.
}
One obvious pattern is that, California and Texas have generally higher number of gonorrhea cases as compared to other states.
In addition, the number of gonorrhea cases in the northern US  is smaller than that in the southern US.

Next, we consider the temporal pattern of the gonorrhea data set.
Figure \ref{fig:TimeSeriesAnnualUS} plots the annual number of gonorrhea cases over the years 2006-2018 in the US.
It is evident that there is a global-level  decreasing trend during 2010-2013.
One possible explanation is the Obamacare, which seems to reduce the risk of infectious diseases.
As we mentioned before, we are not interested in detecting this type of global changes,
and we focus on the detection of the changes on the local patterns, which are referred to as hot-spots in our paper.

\begin{figure}[t]
	\centering
	\includegraphics[width=0.5\textwidth]{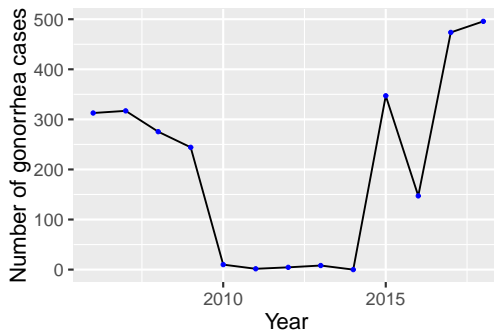}
	\caption{Annual number of gonorrhea cases (in thousands) over the years 2006-2018 in the US}
	\label{fig:TimeSeriesAnnualUS}
\end{figure}

Moreover, the gonorrhea data consists of weekly data, and thus it is necessary to address the circular patterns over the direction of ``week''. Figure  \ref{fig:circular_pattern} shows the country-scaled weekly gonorrhea case in the form of ``rose'' diagram for some selected years. In this figure, each direction represents a given week, and the length represents the number of gonorrhea cases for a given week. It reveals differences in the number of gonorrhea cases across a different week of the year. For instance, in July and August (in the direction of 8 o'clock on the circle), the number of gonorrhea case tends to be larger than other weeks.

\begin{figure}[htbp]
\centering
	\begin{tabular}{cc}
		\includegraphics[width = 0.5\textwidth]{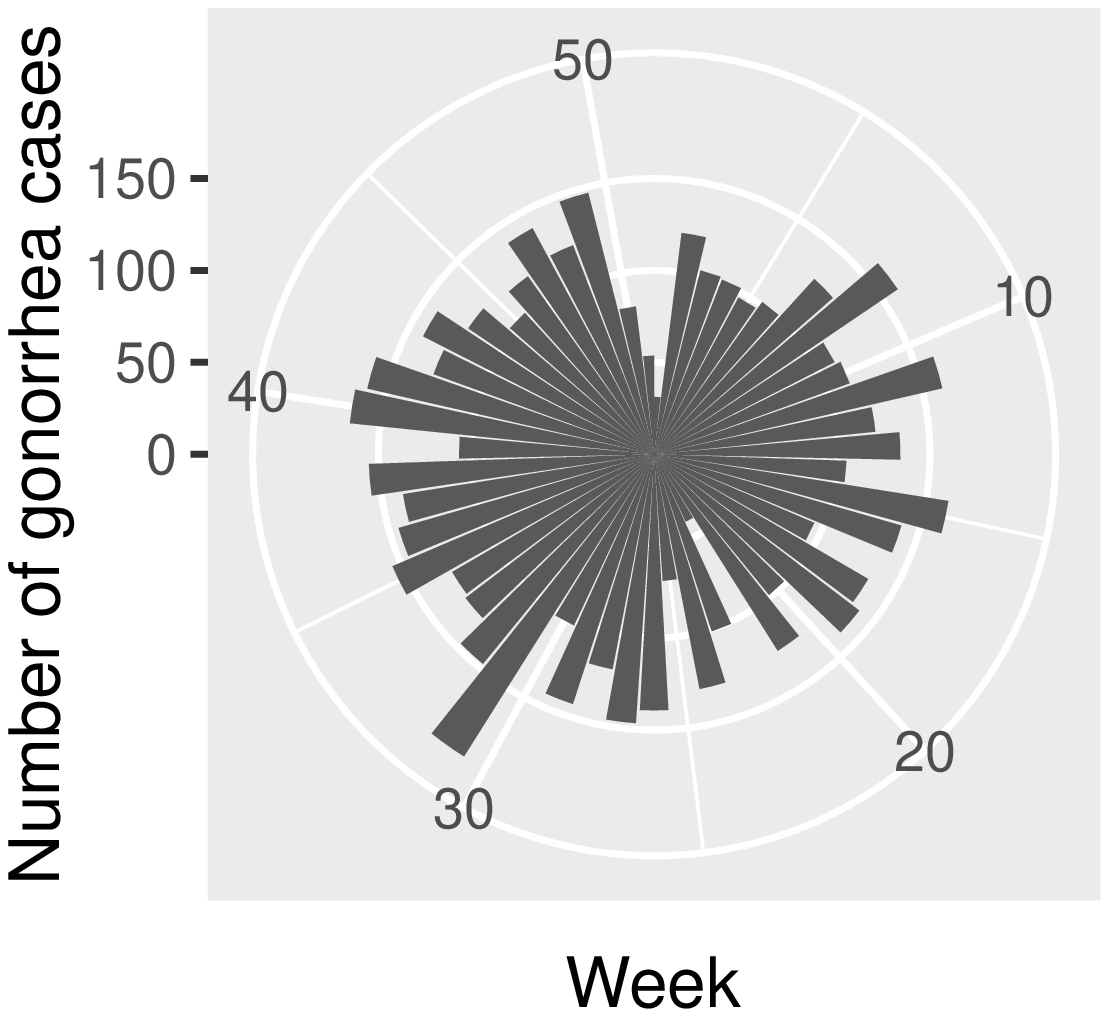}  &
		\includegraphics[width = 0.5\textwidth]{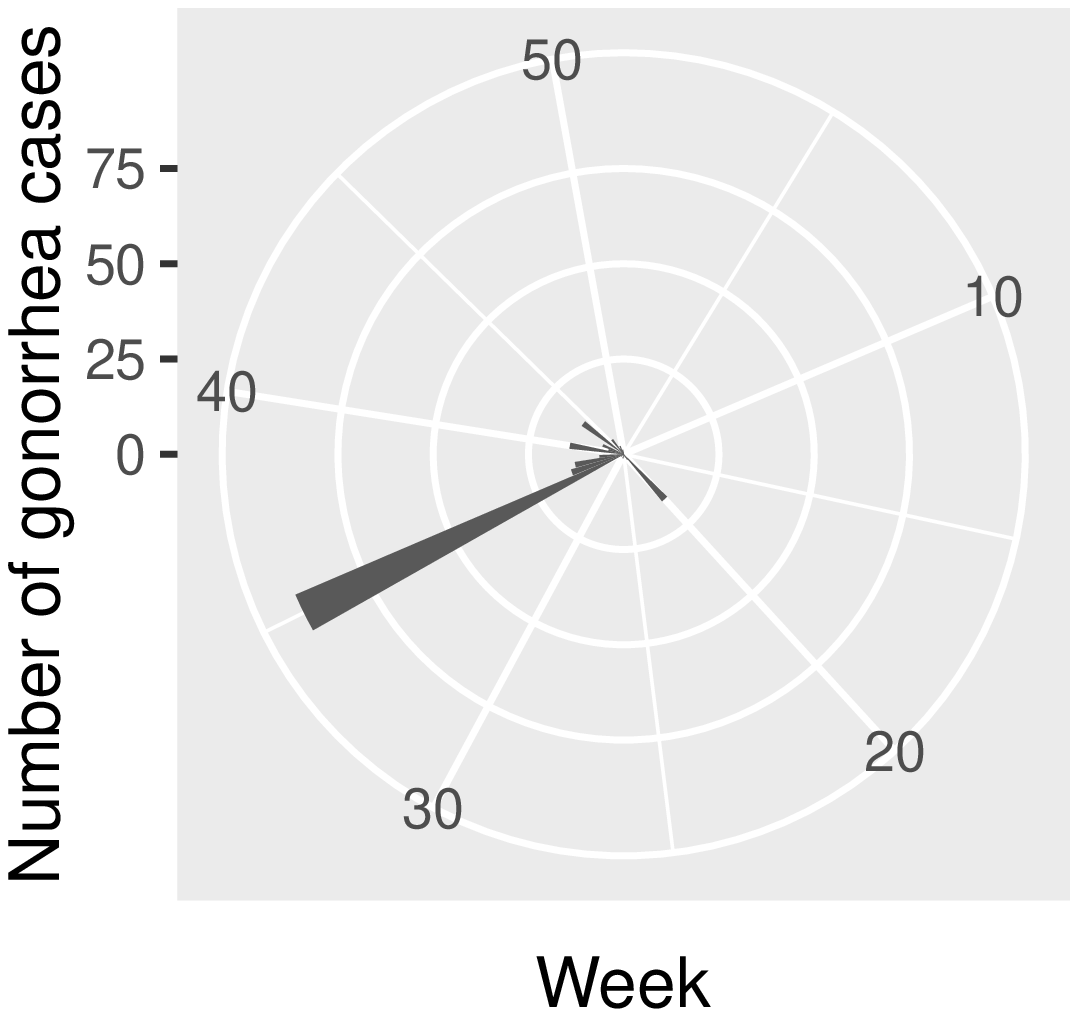}  \\
        2006  & 2010 \\
		\includegraphics[width = 0.5\textwidth]{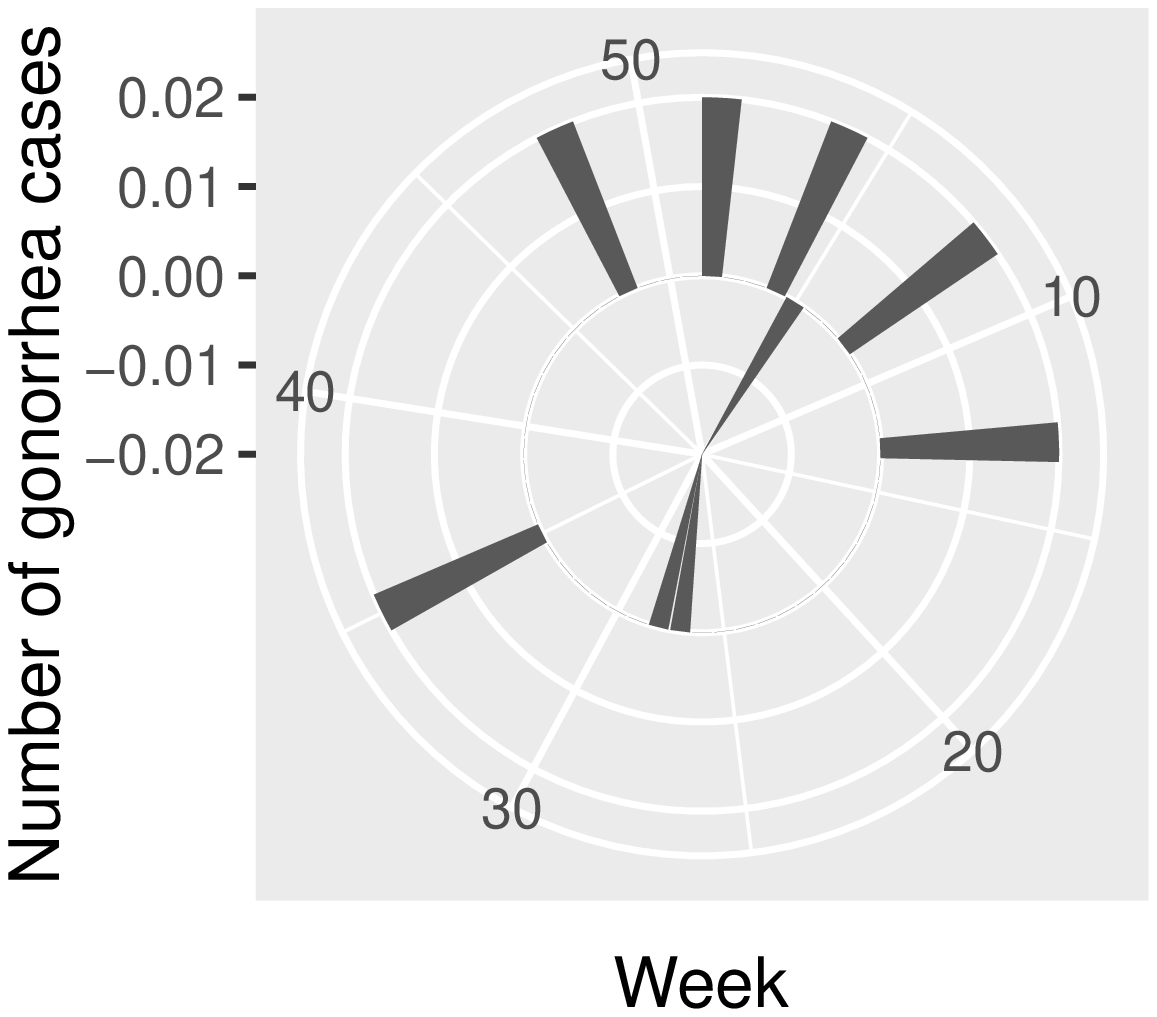}  &
		\includegraphics[width = 0.5\textwidth]{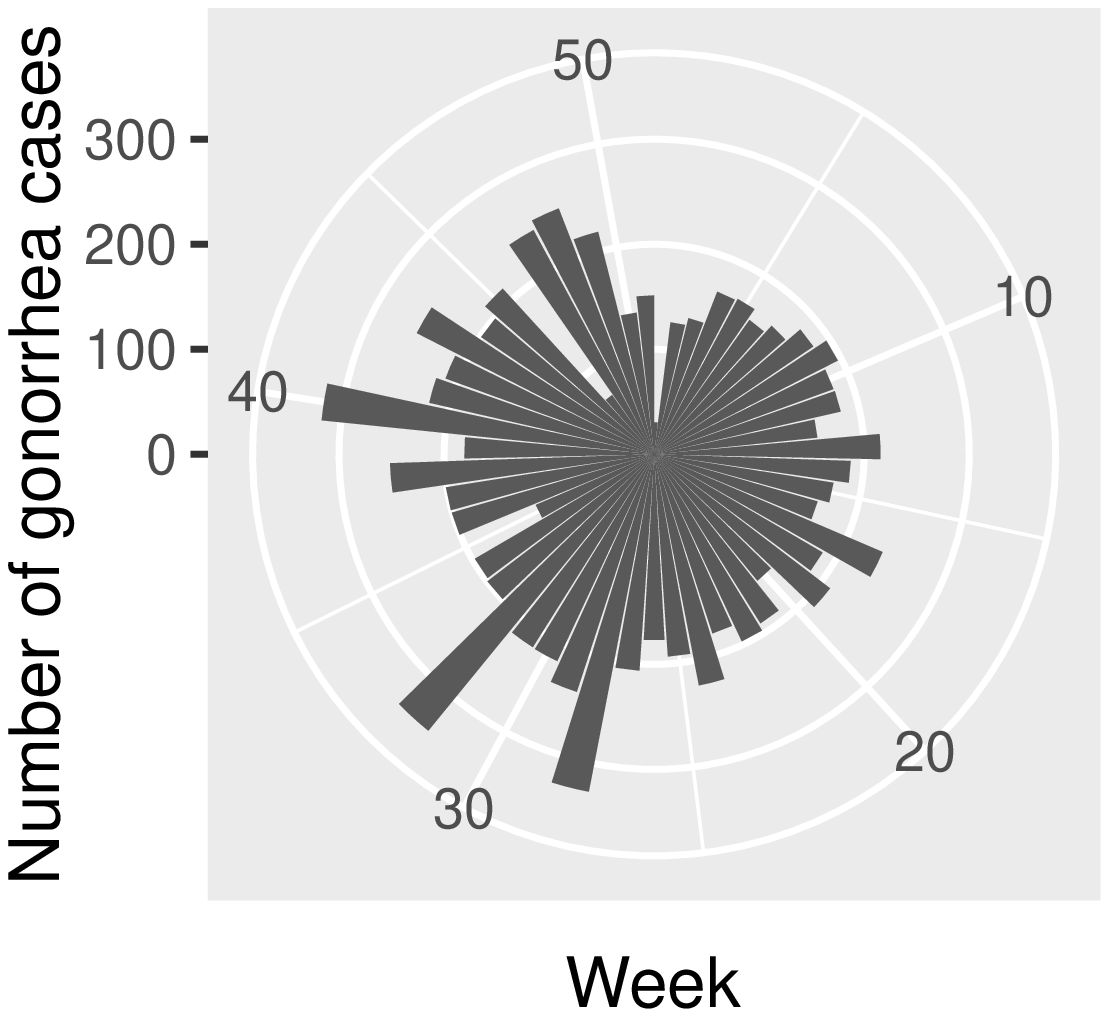}  \\
		 2014 & 2018
	\end{tabular}
	\caption{\yujie{Histograms of the number of gonorrhea cases of Year 2006, 2010, 2014, 2018. Each direction represents a given week, and the length represents the number of gonorrhea cases for a given week.}
		\label{fig:circular_pattern}}
\end{figure}

\section{Proposed Model}
\label{sec:model_whole}

In this section, we  present our proposed SSR-Tensor model, and postpone the discussion of hot-spot detection methodology to the next section. Owing to the fact that the gonorrhea data is of three dimensions, namely, \{state, week, year\}, it will likely have complex ``within-dimension'' and ``between-dimension'' interaction/correaltion relationship.
Within-dimension relationship includes within-state correlation, within-week correlation, and within-year correlation.
Between-dimension relationship includes between-state-and-week interaction, between-state-and-year interaction, as well as between-week-and-year interaction.
In order to handle these complex ``within'' and ``between'' interaction structures, we propose to use the tensor decomposition method, where bases are used to address ``within-dimension'' correlation, and the tensor product is used for ``between-dimension'' interaction.
Here, the basis is a very important concept where different basis can be chosen for different dimensions.
\textcolor[rgb]{0.00,0.07,1.00}
{
Detailed discussions of the choice of bases are presented in Section \ref{sec:hot-spot_detection_performance}.
}

For the convenience of notation and easy understanding, we first introduce some basic tensor algebra and notation in Section \ref{sec:Tensor_Algebra_and_Notation}.
Then Section \ref{sec:model} presents our proposed model that is able to characterize the complex correlation structures.

\subsection{Tensor Algebra and Notation } \label{sec:Tensor_Algebra_and_Notation}

In this section, we introduce basic notations, definitions, and operators in tensor (multi-linear) algebra that are useful in this paper.
Throughout the paper,
scalars are denoted by lowercase letters (e.g., $\theta$),
vectors are denoted by lowercase boldface letters ($\boldsymbol{\theta}$),
matrices are denoted by uppercase boldface letter ($\boldsymbol{\Theta}$),
and tensors by curlicue letter ($\vartheta$).
For example, an order-$N$ tensor is represented by
$
\vartheta \in \mathbb{R}^{I_{1} \times \cdots \times I_{N}}
$,
where $I_{k}$ represent the mode-$n$ dimension of $\vartheta$
for $k = 1, \ldots, N$.

\yujie{
The mode-$n$ product of a tensor
$
\vartheta \in \mathbb{R}^{I_{1} \times \ldots \times I_{N}}
$
by a matrix $\mathbf{B} \in \mathbb{R}^{J_{n}\times I_{n}}$ is a tensor
$
\mathcal{A} \in \mathbb{R}^{ I_{1} \times \ldots I_{n-1} \times J_n \times I_{n+1} \times \ldots I_{N} }
$,
denoted as
$
\mathcal{A} = \vartheta \times_n \mathbf{B},
$
where each entry of $\mathcal{A}$ is defined as the sum of products of corresponding entries in $\mathcal{A}$ and $\mathbf{B}$:
$
\mathcal{A}_{i_1,\ldots, i_{n-1},j_{n},i_{n+1}, \ldots, i_N}
=
\sum_{i_{n}}
\vartheta_{i_1, \ldots, i_{N}}
\mathbf{B}_{j_n,i_n}
$.
Here we use the notation $\mathbf{B}_{j_n,i_n}$ to refer the $(j_n, i_n)$-th entry in matrix $\mathbf{B}$.
 The notation $\vartheta_{i_1, \ldots, i_{N}}$ is used to refer to the entry in tensor $\vartheta$ with index $(i_1, \ldots, i_{N})$.
 The notation
$
\mathcal{A}_{i_1,\ldots, i_{n-1},j_{n},i_{n+1}, \ldots, i_N}
$
is used to refer the entry in tensor
$\mathcal{A}$ with index
$(i_1,\ldots, i_{n-1},j_{n},i_{n+1}, \ldots, i_N)$.

The mode-n unfold of the tensor $\vartheta \in \mathbb{R}^{I_{1} \times \ldots \times I_{N}}$ is denoted by
$
\vartheta_{(n)} \in \mathbb{R}^{I_n \times (I_1\times \ldots I_{n-1} \times I_{n+1} \times I_N)},
$
where the column vector of $\vartheta_{(n)}$ are the mode-n vector of $\vartheta$.
The mode-n vector of $\vartheta$ are defined as the $I_n$ dimensional vector obtained from $\vartheta$ by varying the index $i_n$ while keeping all the other indices fixed.
For example, $\vartheta_{:,2,3}$ is a model-1 vector.

A very useful technique in the tensor algebra is the Tucker decomposition, which decomposes a tensor into a core tensor multiplied by matrices along each mode:
$
\mathcal{Y}
=
\vartheta \times_{1} \mathbf{B}^{(1)}\times_{2}\mathbf{B}^{(2)}\cdots\times_{N}\mathbf{B}^{(N)}
$,
where $\mathbf{B}^{(n)}$ is an orthogonal $I_{n}\times I_{n}$ matrix and is a principal component mode-$n$ for $n =1, \ldots, N$.
Tensor product can be represented equivalently by a Kronecker product, i.e.,
$
\mathrm{vec}(\mathcal{Y})
=
(\mathbf{B}^{(N)} \otimes \cdots \otimes \mathbf{B}^{(1)}) \mathrm{vec} (\vartheta)
$,
where $\mathrm{vec}(\cdot)$ is the vectorized operator.
Finally, the definition of Kronecker product is as follow: Suppose $\mathbf{B}_{1}\in\mathbb{R}^{m \times n}$ and $\mathbf{B}_{2}\in\mathbb{R}^{p\times q}$ are matrices, the Kronecker product of these matrices, denoted by $\mathbf{B}_{1}\otimes\mathbf{B}_{2}$,
is an $mq\times nq$ block matrix defined by
$$
\mathbf{B}_{1}\otimes\mathbf{B}_{2}
=
\left[\begin{array}{ccc}
b_{11}\mathbf{B}_2 & \cdots & b_{1n}\mathbf{B}_2 \\
\vdots             & \ddots & \vdots             \\
b_{m1}\mathbf{B}_2 & \cdots & b_{mn}\mathbf{B}_2
\end{array}\right].
$$
}

\subsection{Our Proposed SSR-Tensor Model} \label{sec:model}

Our proposed SSR-Tensor model is built on tensors of order three, as it is inspired by the gonorrhea data,  which can be represented as a three-dimension tensor $\mathcal{Y}_{n_{1}\times n_{2}\times T}$ with $n_1=50$ states, $n_2=51$ weeks, and $T=13$ years.
Note that the $i$-th$, $j$-th$, and $k$-th slice of the 3-D tensor along the dimension of state, week, and year can be achieved as $\mathcal{Y}_{i::},\mathcal{Y}_{:j:},\mathcal{Y}_{::k}$ correspondingly, where $i=1\cdots n_{1}$, $j=1\cdots n_{2}$ and $k=1\cdots T$.
For simplicity, we denote $\mathbf{Y}_{k}=\mathcal{Y}_{::k}$.
We further denote $\mathbf{y}_{k}$ as the vectorized form of $\mathbf{Y}_{k}$, and $\mathbf{y}$ as the vectorized form of $\mathcal{Y}$.

The key idea of our proposed model is to separate the global trend from the local pattern by decomposing the tensor $\mathbf{y}$ into three parts, namely the smooth global trend $\boldsymbol{\mu}$, local hot-spot $\mathbf{h}$, and residual $\mathbf{e}$, i.e. $\mathbf{y}=\boldsymbol{\mu}+\mathbf{h}+\mathbf{e}$.
For the first two of the components (e.g. the global trend mean and local hot-spots), we introduce basis decomposition framework to represent the structure of the within correlation in the global background and local hot-spot, also see  \citet{SSD}.

To be more concrete, we assume that global trend mean and local hot-spot can be represented as $\boldsymbol{\mu}=\mathbf{B}_{m}\boldsymbol{\theta}_{m}$ and $\boldsymbol{h}=\mathbf{B}_{h}\boldsymbol{\theta}_{h}$,
where  $\mathbf{B}_{m}$ and $\mathbf{B}_{h}$ are two  bases that will discussed below, and $\boldsymbol{\theta}_{m}$ and $\boldsymbol{\theta}_{h}$ are the model coefficients vector of length $n_{1}n_{2}T$ and needed to be estimated (see Section \ref{sec:estimation}).
Here  the subscript of \textit{m} and \textit{h} are abbreviations for mean and hot-spot.
Next,  it is useful to discuss how to choose the bases $\mathbf{B}_{m}$ and $\mathbf{B}_{h},$ so as to characterize  the complex ``within'' and ``between'' correlation or interaction structures.
For the ``within" correlation structures, we propose to use pre-specified bases, $\mathbf{B}_{m,s}$ and $\mathbf{B}_{h,s}$, for within-state correlation in global trend and hot-spot, where the subscript of \textit{s} is an abbreviation for states.
Similarly,  $\mathbf{B}_{m,w}$ and $\mathbf{B}_{h,w}$ are the pre-specified bases for within-correlation of the same week, whereas $\mathbf{B}_{m,y}$ and $\mathbf{B}_{h,y}$ are the bases for within-time correlation over time.
As for the ``between'' interaction, we use tensor product to describe it, i.e, $\mathbf{B}_{m}=\mathbf{B}_{m,s}\otimes\mathbf{B}_{m,w}\otimes\mathbf{B}_{m,y}$ and $\mathbf{B}_{h}=\mathbf{B}_{h,s}\otimes\mathbf{B}_{h,w}\otimes\mathbf{B}_{h,y}$.
This Kronecker product has been proved to have better computational efficiency in the tensor response data \cite{TensorAlgebra}.
\textcolor[rgb]{0.00,0.07,1.00}
{
Mathematically speaking, all these bases are matrices, which is pre-assigned in our paper.
And the choice of bases in shown in Section \ref{sec:hot-spot_detection_performance}.
}
With the well-structured ``within'' and ``between'' interaction, our proposed model can be written as:
\begin{equation}
\mathbf{y}=(\mathbf{B}_{m,s}\otimes\mathbf{B}_{m,w}\otimes\mathbf{B}_{m,y})\boldsymbol{\theta}_{m}+(\mathbf{B}_{h,s}\otimes\mathbf{B}_{h,w}\otimes\mathbf{B}_{h,y})\boldsymbol{\theta}_{h}+\mathbf{e},
\label{equ:model}
\end{equation}
where $\mathbf{e}{\sim}N(0,\sigma^{2}\mathbf{I})$ is the random noise.
Mathematically speaking, both $\mathbf{B}_{m,s}$ and $\mathbf{B}_{h,s}$ are $n_{1}\times n_{1}$ matrix, $\mathbf{B}_{m,w}$ and $\mathbf{B}_{h,w}$ are $n_{2}\times n_{2}$ matrix and $\mathbf{B}_{m,y}$ and $\mathbf{B}_{h,y}$ are $T \times T$ matrix, respectively.

Mathematically, our proposed model in \eqref{equ:model} can be rewritten into a tensor format:
\begin{equation}
\mathcal{Y}=
\vartheta_{m} \times_{3}
\mathbf{B}_{m,y} \times_{2}
\mathbf{B}_{m,w} \times_{1}
\mathbf{B}_{m,s}
+
\vartheta_{h}\times_{3}
\mathbf{B}_{h,y} \times_{2}
\mathbf{B}_{h,w}\times_{1}
\mathbf{B}_{h,s}+\mathbf{e},
\label{equ:meaning_of_theta}
\end{equation}
where $\vartheta_{m}$ and $\vartheta_{h}$ is the tensor format of $\boldsymbol{\theta}_{m}$ and $\boldsymbol{\theta}_{h}$ with dimensional  $n_{1}\times n_{2}\times T$.
Accordingly, the $((k-1)n_{1}n_{2}+(i-1)n_{1}+j)$-th entry of $\boldsymbol{\theta}_{h}$, $\boldsymbol{\theta}_{m}$ can estimate the global mean and hot-spot in $i$-th state and $j$-th week in $k$-th year respectively.
The tensor representation in equation \eqref{equ:meaning_of_theta} allows us to develop computationally efficient methods for estimation and prediction.

\subsection{Estimation of Hot-spots}

With the proposed SSR-Tensor model above, we can now discuss the estimation of hot-spot parameters $\boldsymbol{\theta}$'s (including $\boldsymbol{\theta}_m$, $\boldsymbol{\theta}_h$) in our model in \eqref{equ:model} or \eqref{equ:meaning_of_theta}
from the data via the penalized likelihood function.
We propose to add two penalties in our estimation. First, because hot-spots rarely occur, we assume that $\boldsymbol{\theta}_{h}$ is sparse and the majority of entries in the hot-spot coefficient $\boldsymbol{\theta}_{h}$ are zeros. Thus we propose to add the penalty $R_{1}(\boldsymbol{\theta}_{h})=\lambda\Vert\boldsymbol{\theta}_{h}\Vert_{1}$  to encourage the sparsity property of $\boldsymbol{\theta}_{h}$.
Second, we assume there is temporal continuity of the hot-spots, as
the usual phenomenon of last year is likely to affect the performance of hot-spot in this year.
Thus, we add the second penalty
$ R_{2}(\boldsymbol{\theta}_{h})= \lambda_ 2\Vert \mathbf{D} \boldsymbol{\theta}_{h} \Vert_1$
to ensure the yearly continuity of the hot-spot, where
$
\mathbf{D} =   \mathbf{D}_{s} \otimes  \mathbf{D}_{w} \otimes  \mathbf{D}_{y}
$
with $ \mathbf{D}_{s}$ as identical matrix of dimension $n_1\times n_1$, and $T \times T$ matrix
$
\mathbf{D}_{y} =
\left[
\begin{array}{ccccc}
1 & -1\\
&  & \ddots & \ddots\\
&  &  & 1 & -1\\
&  &  &  & 1
\end{array}
\right]
$,
$n_2 \times n_2$ matrix
$
\mathbf{D}_{w} =
\left[
\begin{array}{ccccc}
1 & -1\\
&  & \ddots & \ddots\\
&  &  & 1 & -1\\
-1&  &  &  & 1
\end{array}
\right].
$
With the formula of $\mathbf{D}_y$, the hot-spot has the property of yearly continuity.
By the formula of $\mathbf{D}_w$, the hot-spot has a weekly circular pattern.

By combining both penalties, we propose to estimate the parameters via the following optimization problem:
\begin{eqnarray}
\label{equ:eatimation}
&& \arg\min_{\boldsymbol{\theta}_{m},\boldsymbol{\theta}_{h}}
\Vert\boldsymbol{e}\Vert^{2}
+
\lambda_{1}\Vert\boldsymbol{\theta}_{h}\Vert_{1}
+
\lambda_{2}\Vert \mathbf{D}\boldsymbol{\theta}_{h}\Vert_{1}\\
&& \mbox{subject to} \;\;
\boldsymbol{y} =
(\mathbf{B}_{m,s}\otimes\mathbf{B}_{m,w}
\otimes
\mathbf{B}_{m,y})\boldsymbol{\theta}_{m}
+
(\mathbf{B}_{h,s}
\otimes
\mathbf{B}_{h,w}
\otimes
\mathbf{B}_{h,y})\boldsymbol{\theta}_{h}
+\mathbf{e}, \nonumber
\end{eqnarray}
where
$
\boldsymbol{\theta}_{m} =
\mathrm{vec}(\boldsymbol{\theta}_{m,1},\ldots, \boldsymbol{\theta}_{m,t},\ldots,\boldsymbol{\theta}_{m,T})$
and $\boldsymbol{\theta}_{h} =
\mathrm{vec}(\boldsymbol{\theta}_{h,1},,\ldots, \boldsymbol{\theta}_{h,t},\ldots, \boldsymbol{\theta}_{h,T})$.
\textcolor[rgb]{0.00,0.07,1.00}
{
The choice of the turning parameters $\lambda_1, \lambda_2$ will be discussed in Section \ref{sec:hot-spot_detection}.
}

Note that there are two penalties in equation \eqref{equ:eatimation}:
$\lambda_{1}\Vert\boldsymbol{\theta}_{h}\Vert_{1}$ is the LASSO penalty
to control both the sparsity of the hot-spots and $\lambda_{2}\Vert \mathbf{D}\boldsymbol{\theta}_{h}\Vert_{1}$ is the fused LASSO penalty
(Tibshirani et al., 2005) to control the temporal consistency of the hot-spots.
Traditional algorithms  often involve the storage and computation of the matrix $\mathbf{B}_m$ and $\mathbf{B}_h$, which is of the dimension $n_1n_2n_3 \times n_1n_2n_3.$ Thus they might work to solve the optimization problem in equation \eqref{equ:eatimation}
when the dimensions are small, but they will be computationally infeasible as the dimensions grow. To address this computational challenge, we propose to simplify the computational complexity by modifying the matrix algebra in traditional algorithm into tensor algebra, and will discuss how to optimize the problem in equation \eqref{equ:eatimation} computationally efficiently in Section \ref{sec:estimation}.

\section{Hot-spot Detection} \label{sec:hot-spot_detection}

This section focuses on the detection of the hot-spot, which includes the detection and identification of the year (when), the state (where) and the week (which) of the hot-spots. In our case study, we focus on the upward shift of the number of gonorrhea cases, since the increasing gonorrhea is generally more harmful to the societies and communities.
Of course, one can also detect the downward shift with a slight modification of our proposed algorithms by multiplying $-1$ to the raw data.

For the purpose of easy presentation, we first discuss the detection of the hot-spot, i.e., detect when hot-spot occurs in
Subsection \ref{sec:Temporal_Detection}. Then,  in Subsection \ref{sec:spatial_location}, we consider the localization of the hot-spot, i.e.,
determine which states and which weeks are involved for the detected hot-spots.

\subsection{Detect When the Hot Spot Occurs}
\label{sec:Temporal_Detection}

To determine when the hot-spot occurs, we consider the following hypothesis test and set up the control chart for the hot-spot detection \eqref{eq:chaneg_hypothesis_testing}.
\begin{equation}
H_{0}:
\widetilde{\mathbf{r}}_{t} = 0
\;\;\;
v.s.
\;\;\;
H_{1}:
\widetilde{\mathbf{r}}_{t} = \delta \widehat{\mathbf{h}}_{t} \;\;\; (\delta>0),
\label{eq:chaneg_hypothesis_testing}
\end{equation}
where $\widetilde{\mathbf{r}}_{t}$ is the expected residuals after removing the mean.
The essence of this test is that, we want to detect whether $\widetilde{\mathbf{r}}_{t}$ has a mean shift in the direction of $\widehat{\mathbf{h}}_{t}$, estimated in Section \ref{sec:estimation}.
To test this hypotheses, the likelihood ratio test is applied to the residual $\mathbf{r}_{t}$ at each time $t$, i.e. $\mathbf{r}_{t}=\mathbf{y}_{t}-\boldsymbol{\mu}_{t}$, where it assumes that the residuals $\mathbf{r}_{t}$ is independent after removing the mean and its distribution before and after the hot-spot remains the same.
Accordingly, the test statistics monitoring upward shift is designed as
$
P_{t}^{+}
=
\widehat{\mathbf{h}}_{t}'^{+}\mathbf{r}_{t}/\sqrt{\widehat{\mathbf{h}}_{t}'^{+}\widehat{\mathbf{h}}_{t}^{+}}
$
\citep{hawkins1993regression},
where $\widehat{\mathbf{h}}_{t}^{+}$ only takes the positive part of $\widehat{\mathbf{h}}_{t}$ with other entries as zero.
Here we put a superscript ``+'' to emphasis that it aims for upward shift.

\yujie{
The choices of the penalty parameters $\lambda_{1},\lambda_{2}$ are describled as follows.
}
In order to select the one with the most power, we propose to calculate a series of $P_{t}^{+}$ under different combination of $(\lambda_{1},\lambda_{2})$ from the set
$
\Gamma
=
\{(\lambda_{1}^{(1)},\lambda_{2}^{(1)})\cdots(\lambda_{1}^{(n_{\lambda})},\lambda_{2}^{(n_{\lambda})})\}
$.
For better illustration, we denote the test statistics under penalty parameter $(\lambda_{1},\lambda_{2})$ as $P_{t}^{+}(\lambda_{1},\lambda_{2})$.
The test statistics \citep{LASSO} with the most power to detect the change, noted as $\widetilde{P}_{t}^{+}$, can be computed by
\begin{equation}
\widetilde{P}_{t}^{+}=\max_{(\lambda_{1},\lambda_{2})\in\Gamma}\frac{P_{t}^{+}(\lambda_{1},\lambda_{2})-E(P_{t}^{+}(\lambda_{1},\lambda_{2}))}{\sqrt{Var(P_{t}^{+}(\lambda_{1},\lambda_{2}))}},\label{equ:most_power}
\end{equation}
where $E(P_{t}^{+}(\lambda_{1},\lambda_{2}))$, $Var(P_{t}^{+}(\lambda_{1},\lambda_{2}))$ respectively are the mean and variance of $P_{t}(\lambda_{1},\lambda_{2})$ under $H_{0}$ (e.g. for phase-I in-control samples).

Note that the penalty parameter $(\lambda_{1},\lambda_{2})$ to realize the maximization in equation \eqref{equ:most_power} is generally different under different time $t$.
To emphasize such dependence of time $t$, denote by $(\lambda_{1,t}^{*},\lambda_{2,t}^{*})$ the parameter pair that attains the maximization in equation \eqref{equ:most_power} at time $t$, i.e,
\begin{equation}
(\lambda_{1,t}^{*},\lambda_{2,t}^{*})=\arg\max_{(\lambda_{1},\lambda_{2})\in\Gamma}\frac{P_{t}^{+}(\lambda_{1},\lambda_{2})-E(P_{t}^{+}(\lambda_{1},\lambda_{2}))}{\sqrt{Var(P_{t}^{+}(\lambda_{1},\lambda_{2}))}}.
\label{eq:lambda12}
\end{equation}
Thus, the series of the test statistics for the hot-spot at time $t$ is $\widetilde{P}_{t}^{+}(\lambda_{1,t}^{*},\lambda_{2,t}^{*})$ where $t=1\cdots T$.

With the test statistic available, we design a control chart based on the CUSUM procedure due to the following reasons:
(1) we are interested in detecting the change with the temporal continuity, therefore,  aligns with the objective of CUSUM.
(2) In the view of social stability, we want to keep gonorrhea at a target value without sudden changes, which makes the CUSUM chart is a natural better fit.

To be more specific, in the CUSUM procedure, we compute the CUSUM statistics recursively by $$W_{t}^{+}=\max\{0,W_{t-1}^{+}+\widetilde{P}_{t}^{+}(\lambda_{1,t}^{*},\lambda_{2,t}^{*})-d\},
$$
and $W_{t=0}^{+}=0,$ where $d$ is a constant and can be chosen according to the degree of the shift that we want to detect.
Next, we set the control limit $L$
to achieve a desirable ARL for in-control samples.
Finally, whenever $W_{t}^{+} > L$ at some time $t=t^{*},$ we declare that a hot-spot occurs at time $t^{*}$.

\subsection{Localize Where and Which the Hot Spot Occur?}
\label{sec:spatial_location}

    After the hot-spot $t^*$ has been detected by the CUSUM control chart in the previous section, the next step is to localize  where and which crime rate may account for this hot-spot.
    To do so, we propose to utilize the   vector
    $$
    \widehat{\mathbf{h}}_{\lambda_{1, t^{*}}^{*}, \lambda_{2,t^{*}}^{*}}
    =
    \mathbf{B}_{h}\widehat{\boldsymbol{\theta}}_{h, \lambda_{1,t^{*}}^{*}, \lambda_{2,t^{*}}^{*}}
    $$
    at the declared hot-spot time $t^{*}$ and the corresponding  parameter
    $\lambda_{1,t^{*}}^{*},\lambda_{2,t^{*}}^{*}$
    in equation \eqref{eq:lambda12}.
    For the numerical computation purpose, it  is often easier to directly work with the tensor format of the hot-spot
    $
    \widehat{\mathbf{h}}_{ \lambda_{1,t^{*}}^{*},\lambda_{2,t^{*}}^{*}}
    $,
    denoted as
    $
    \widehat{\mathcal{H}}_{\lambda_{1,t^{*}}^{*}, \lambda_{2,t^{*}}^{*}}
    $,
    which is a tenor of dimension $ n_{1} \times n_{2} \times T $.
    If the $(i,j, t^*)$-th entry in $\widehat{\mathcal{H}}_{\lambda_{1,t^{*}}^{*}, \lambda_{2,t^{*}}^{*}}$ is non-zero, then we declare that there is a  hot-spot for the $j$-th crime rate type in the $i$-th state in $t^*$-th year.

\section{Optimization Algorithm}
\label{sec:estimation}

In this section, we will develop an efficient  optimization algorithm for solving the optimization problem in equation \eqref{equ:eatimation}.
For notion convenience, we adjust the notation above a little bit.
Because $\boldsymbol{\theta}_{m},\boldsymbol{\theta}_{h}$ in equation \eqref{equ:eatimation} is solved under penalty $\lambda_{1}R_{1}(\boldsymbol{\theta}_{h})+\lambda_{2}R_{2}(\boldsymbol{\theta}_{h})$, we change $\boldsymbol{\theta}_{m}$, $\boldsymbol{\theta}_{h}$ into $\boldsymbol{\theta}_{m,\lambda_{1},\lambda_{2}},\boldsymbol{\theta}_{h,\lambda_{1},\lambda_{2}}$ to emphasis the penalty parameter $\lambda_{1}$ and $\lambda_{2}$.
Accordingly, $\boldsymbol{\theta}_{h,0,\lambda_{2}}$ refers to the estimator only under the second penalty $\lambda_{2}R_{2}(\boldsymbol{\theta}_{h})$, i.e,
\begin{equation}
\boldsymbol{\theta}_{h,0,\lambda_{2}}=\arg\min_{\boldsymbol{\theta}_{m},\boldsymbol{\theta}_{h}}\{\Vert\mathbf{e}\Vert_{2}^{2}+\lambda R_{2}(\boldsymbol{\theta}_{h})\}.\label{equ:one_penalty}
\end{equation}
The structure of this section is that, we first develop the procedure of our proposed method in
Subsection \ref{sec:algorithm_procedure} and then gives the computational complexity in Subsection \ref{sec:computational_complexity}.

\subsection{Procedure of Our Algorithm}
\label{sec:algorithm_procedure}

In the optimization problem shown in equation \eqref{equ:eatimation}, there are two unknown vectors, namely $\boldsymbol{\theta}_{m,\lambda_{1},\lambda_{2}}$, $\boldsymbol{\theta}_{h,,\lambda_{1},\lambda_{2}}$.
To simplify the optimization above, we first figure out the close-form correlation between $\boldsymbol{\theta}_{m,\lambda_{1},\lambda_{2}}$ and $\boldsymbol{\theta}_{h,\lambda_{1},\lambda_{2}}$.
Then, we solve the optimization by modifying the  matrix algebra in FISTA\citep{FISTA} into tensor algebra.
The key to realize it is the proximal mapping of $\lambda_{1}R_{1}(\boldsymbol{\theta}_{h,\lambda_1,\lambda_2})+\lambda_{2}R_{2}(\boldsymbol{\theta}_{h,\lambda_1,\lambda_2})$.
To address it, we first aims at the proximal mapping of $\lambda_{2}R_{2}(\boldsymbol{\theta}_{h,0,\lambda_1})$, where SFA via gradient descent \citep{liu2010efficient} is used.
And then the proximal mapping of $\lambda_{1}R_{1}(\boldsymbol{\theta}_{h,\lambda_1,\lambda_2})+\lambda_{2}R_{2}(\boldsymbol{\theta}_{h,\lambda_1,\lambda_2})$ can be solved with a close-form correlation between it and the proximal mapping of $\lambda_{2}R_{2}(\boldsymbol{\theta}_{h,0,\lambda_2})$.

There are three subsections in this section, where each subsection represents one step in our proposed algorithm.

\subsubsection{Estimate the mean parameter}

To begin with, we first simplify the optimization problem in equation \eqref{equ:eatimation}, i.e., figure out the close-form correlation between $\boldsymbol{\theta}_{m,\lambda_{1},\lambda_{2}}$ and $\boldsymbol{\theta}_{h,\lambda_{1},\lambda_{2}}$.

Although there are two sets of parameters $\boldsymbol{\theta}_{m,\lambda_{1},\lambda_{2}}$ and $\boldsymbol{\theta}_{h,\lambda_{1},\lambda_{2}}$ in the model, we note that given $\boldsymbol{\theta}_{h,\lambda_{1},\lambda_{2}}$, the parameter $\boldsymbol{\theta}_{m,\lambda_{1},\lambda_{2}}$ is involved in the standard least squared estimation and thus can be solved in the closed-form solution, see equation \eqref{equ:theta_and_theta_a} in the proposition below.

\begin{prop}
	Given $\boldsymbol{\theta}_{h,\lambda_{1},\lambda_{2}}$, the closed-form solution of $\boldsymbol{\theta}_{m,\lambda_{1},\lambda_{2}}$  is given by:
	\begin{equation}
	\boldsymbol{\theta}_{m,\lambda_{1},\lambda_{2}}=
	(\mathbf{B}_{m}'\mathbf{B}_{m})^{-1}(\mathbf{B}_{m}'y-\mathbf{B}_{m}'\mathbf{B}_{h}\boldsymbol{\theta}_{h,\lambda_{1},\lambda_{2}}).
	\label{equ:theta_and_theta_a}
	\end{equation}
\end{prop}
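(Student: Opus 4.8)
The plan is to treat this as a partial-minimization (profiling) step: fix $\boldsymbol{\theta}_{h,\lambda_{1},\lambda_{2}}$ and minimize the objective in \eqref{equ:eatimation} over $\boldsymbol{\theta}_{m}$ alone. First I would eliminate the equality constraint by substituting $\mathbf{e}=\mathbf{y}-\mathbf{B}_{m}\boldsymbol{\theta}_{m}-\mathbf{B}_{h}\boldsymbol{\theta}_{h}$, where $\mathbf{B}_{m}=\mathbf{B}_{m,s}\otimes\mathbf{B}_{m,w}\otimes\mathbf{B}_{m,y}$ and $\mathbf{B}_{h}=\mathbf{B}_{h,s}\otimes\mathbf{B}_{h,w}\otimes\mathbf{B}_{h,y}$, so that \eqref{equ:eatimation} becomes the unconstrained minimization of $f(\boldsymbol{\theta}_{m},\boldsymbol{\theta}_{h})=\Vert\mathbf{y}-\mathbf{B}_{m}\boldsymbol{\theta}_{m}-\mathbf{B}_{h}\boldsymbol{\theta}_{h}\Vert_{2}^{2}+\lambda_{1}\Vert\boldsymbol{\theta}_{h}\Vert_{1}+\lambda_{2}\Vert\mathbf{D}\boldsymbol{\theta}_{h}\Vert_{1}$.

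Second, I would observe that with $\boldsymbol{\theta}_{h}$ held fixed the two penalty terms $\lambda_{1}\Vert\boldsymbol{\theta}_{h}\Vert_{1}$ and $\lambda_{2}\Vert\mathbf{D}\boldsymbol{\theta}_{h}\Vert_{1}$ are constants, so a minimizer in $\boldsymbol{\theta}_{m}$ solves the ordinary least squares problem $\min_{\boldsymbol{\theta}_{m}}\Vert(\mathbf{y}-\mathbf{B}_{h}\boldsymbol{\theta}_{h})-\mathbf{B}_{m}\boldsymbol{\theta}_{m}\Vert_{2}^{2}$. This is a smooth convex quadratic in $\boldsymbol{\theta}_{m}$, so stationarity is both necessary and sufficient for a global minimum. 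Setting the gradient to zero yields the normal equations $\mathbf{B}_{m}'\mathbf{B}_{m}\boldsymbol{\theta}_{m}=\mathbf{B}_{m}'(\mathbf{y}-\mathbf{B}_{h}\boldsymbol{\theta}_{h})$, and left-multiplying by $(\mathbf{B}_{m}'\mathbf{B}_{m})^{-1}$ gives exactly \eqref{equ:theta_and_theta_a}.

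Third, I would justify the inverse. Since $\mathbf{B}_{m}$ is a Kronecker product of the square matrices $\mathbf{B}_{m,s}$ ($n_{1}\times n_{1}$), $\mathbf{B}_{m,w}$ ($n_{2}\times n_{2}$), and $\mathbf{B}_{m,y}$ ($T\times T$), it is itself a square $n_{1}n_{2}T\times n_{1}n_{2}T$ matrix, and it is nonsingular whenever each factor is (its rank is the product of the factor ranks). Hence $\mathbf{B}_{m}$ has full column rank, $\mathbf{B}_{m}'\mathbf{B}_{m}$ is positive definite, and the inverse in \eqref{equ:theta_and_theta_a} is well defined; if one prefers not to assume this, the inverse may be replaced by a Moore--Penrose pseudoinverse and the displayed formula is the standard full-rank specialization.

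The computation is entirely routine, so I do not anticipate a genuine obstacle; the only points requiring care are the logical structure — namely that \eqref{equ:theta_and_theta_a} is precisely the $\arg\min$ over $\boldsymbol{\theta}_{m}$ for each fixed $\boldsymbol{\theta}_{h}$, which is what licenses substituting it back to reduce \eqref{equ:eatimation} to an optimization in $\boldsymbol{\theta}_{h}$ only — and the mild full-rank assumption on $\mathbf{B}_{m}$ needed to invert $\mathbf{B}_{m}'\mathbf{B}_{m}$. This proposition then serves as the first step of the FISTA-based algorithm developed in Section~\ref{sec:estimation}.
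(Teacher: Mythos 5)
Your proposal is correct and follows essentially the same route the paper relies on: with $\boldsymbol{\theta}_{h,\lambda_{1},\lambda_{2}}$ fixed the penalties are constants, so $\boldsymbol{\theta}_{m,\lambda_{1},\lambda_{2}}$ is the ordinary least squares solution of a quadratic problem, and the normal equations give \eqref{equ:theta_and_theta_a} directly (the paper states this as ``standard least squares estimation'' without writing out the derivation). Your added remark on the full-rank/invertibility of $\mathbf{B}_{m}'\mathbf{B}_{m}$ via the Kronecker structure is a reasonable and harmless elaboration of the implicit assumption.
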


It remains to investigate how to estimate the parameter $\boldsymbol{\theta}_{h,\lambda_{1},\lambda_{2}}.$
After plugging in \eqref{equ:theta_and_theta_a} into \eqref{equ:eatimation}, the optimization problem for estimating $\boldsymbol{\theta}_{h,\lambda_{1},\lambda_{2}}$ becomes
\begin{equation}
\arg\min_{\boldsymbol{\theta}_{h,\lambda_{1},\lambda_{2}}}
\Vert\mathbf{y}^{*}-\mathbf{X}\boldsymbol{\theta}_{h,\lambda_{1},\lambda_{2}}\Vert_{2}^{2}
+
\lambda_{1}\Vert\boldsymbol{\theta}_{h,\lambda_{1},\lambda_{2}}\Vert_{1}
+
\lambda_{2}\Vert\mathbf{D}\boldsymbol{\theta}_{h,\lambda_1,\lambda_2}\Vert_{1}
,\label{equ:opt_with_2_penalty}
\end{equation}
where $\mathbf{y}^{*}=\left[\mathbf{I}-\mathbf{H}_{m}\right]\mathbf{y}$ , $\mathbf X=\left[\mathbf{I}-\mathbf{H}_{m}\right]\mathbf{B}_{h}$ and $\mathbf{H}_{m}=\mathbf{B}_{m}(\mathbf{B}_{m}'\mathbf{B}_{m})^{-1}\mathbf{B}_{m}'$ is the projection matrix.

Due to the high dimension, we need to develop an efficient  and  precise optimization algorithm to optimize\eqref{equ:eatimation}.
Obviously, \eqref{equ:opt_with_2_penalty} is a typical sparse optimization problem.
However, most of the sparse optimization frameworks focus on optimizing Eq. (7).

\begin{equation}
\label{equ:estimate_theta0lambda2_without_D}
\arg\min_{\boldsymbol{\theta}_{h,0,\lambda_{2}}}
\Vert \mathbf{y}^{*}-\mathbf{X}\boldsymbol{\theta}_{h,\lambda_1,0}\Vert_{2}^{2} +
\lambda_{1}\Vert\boldsymbol{\theta}_{h,\lambda_1,0} \Vert_{1},
\end{equation}
such as \cite{ISTA}, \cite{FISTA}, \cite{glmnet} and so on, where iterative updating rule are used base either on the gradient information or the proximal mapping.
In most cases, the algorithms above works, however,  two challenges occur in our paper:
\begin{enumerate}
	\item When the dimension of $\mathbf{X}$ (of size $n_1n_2T \times n_1n_2T$) become increasingly large, it is difficult for the computer to store and memorize it.
	\item When the penalty term is
	$
	\lambda_1 \Vert \boldsymbol{\theta}_{h,\lambda_1,\lambda_2}\Vert_1
	+
	\lambda_{2}\Vert\boldsymbol{D\theta}_{h,\lambda_1,\lambda_{2}}\Vert_{1}
	$, instead of only
	$
	\lambda_{1}\Vert \boldsymbol{\theta}_{h,\lambda_1,\lambda_{2}}\Vert_{1}
	$,
	direct application of the  proximal mapping  of $\lambda_{1}\Vert \boldsymbol{\theta}_{h,\lambda_1,\lambda_{2}}\Vert_{1}$  is not workable.
\end{enumerate}

Therefore, directly applying these above algorithms(\cite{FISTA}, \cite{ISTA}, \cite{glmnet}) to our case is not feasible.
To extend the existing research, we proposed an iterative algorithm in Algorithm \ref{alg:Tensor_and_FISTA} and we explain the approach to solve the proximal mapping of
$
\lambda_1 \Vert \boldsymbol{\theta}_{h,\lambda_1,\lambda_2}\Vert_1
+
\lambda_{2}\Vert\boldsymbol{D\theta}_{h,\lambda_1,\lambda_{2}}\Vert_{1}
$ in Section \ref{sec:proximal_map}.

\subsubsection{Proximal Mapping} \label{sec:proximal_map}

The main tool we use to solve the optimization problem in equation \eqref{equ:opt_with_2_penalty} is a variation of proximal mapping.
Denote that
$
F(\boldsymbol{\theta}_{h,\lambda_1,\lambda_2})
=
\frac{1}{2} \| \mathbf{y}^*-\mathbf{X} \boldsymbol{\theta}_{h,\lambda_1,\lambda_2} \|_2^2 .
$
And in the $i$-th iteration, the according recursive estimator of $\boldsymbol{\theta}_{h,\lambda_1,\lambda_2}$ is noted as $\boldsymbol{\theta}_{h,\lambda_1,\lambda_2}^{(i)}$.
Besides,an auxiliary variable $\boldsymbol{\eta}^{(i)}$ is introduced to update from $\boldsymbol{\theta}_{h,\lambda_1,\lambda_2}^{(i)}$ to $\boldsymbol{\theta}_{h,\lambda_1,\lambda_2}^{(i+1)}$ through

\begin{eqnarray*}
	\boldsymbol{\theta}_{h,\lambda_1,\lambda_2}^{(i+1)}
	& = &
	\arg\min_{\boldsymbol{\theta} }
	F( \boldsymbol{\eta}^{(i)} ) +
	\frac{\partial}{\partial \boldsymbol{\theta}_{h,\lambda_1,\lambda_2}} F( \boldsymbol{\eta}^{(i)} ) \left( \boldsymbol{\theta} - \boldsymbol{\eta}^{(i)} \right) + \\
	& &
	\lambda_1 \| \boldsymbol{\theta} \|_1 +
	\lambda_2 \| \mathbf{D} \boldsymbol{\theta} \|_1 +
	\frac{L}{2} \| \boldsymbol{\theta} - \boldsymbol{\eta}^{(i)} \|_2^2
	\\
	& = &
	\arg\min_{\boldsymbol{\theta} }
	\left[
	\frac{1}{2} \left[ \boldsymbol{\theta} -\left( \boldsymbol{\eta}^{(i)} -\frac{\partial}{L \partial \boldsymbol{\theta}} F( \boldsymbol{\eta}^{(i)} )  \right)  \right]^2 +
	\lambda_1 \| \boldsymbol{\theta} \|_1+
	\lambda_2 \| \mathbf{D} \boldsymbol{\theta} \|_1\right]\\
	& \triangleq & \pi_{\lambda_2}^{ \lambda_1 }(\mathbf{v})
\end{eqnarray*}
where
$
\mathbf{v}=\boldsymbol{\eta}^{(i)} -\frac{\partial}{L \partial \boldsymbol{\theta}} F( \boldsymbol{\eta}^{(i)} )
$,
$
\boldsymbol{\eta}^{(i)} =
\boldsymbol{\theta}_{h,\lambda_1,\lambda_2}^{(i)} + \frac{ t_{i-2} -1 }{t_{i-1}} (\boldsymbol{\theta}_{h,\lambda_1,\lambda_2}^{(i)}   -  \boldsymbol{\theta}_{h,\lambda_1,\lambda_2}^{(i-1)})
$ and $t_{-1}=t_0=1$, $t_{i+1} = \frac{1+\sqrt{1+4t_i^2}}{2}$

Because it is difficult to solve $\pi_{\lambda_2}^{ \lambda_1 }(\mathbf{v})$ directly, we aim to solve $\pi_{\lambda_2}^{ 0 }(\mathbf{v})$ first.
And proved by \cite{liu2010efficient}, there is a close-form correlation between $\pi_{\lambda_2}^{ \lambda_1 }(\mathbf{v})$ and $\pi_{\lambda_2}^{ 0 }(\mathbf{v})$, which is shown in Proposition \ref{prop:theta_0_lambda2_and_theta_lambda1_lambda2}.

\begin{prop}
	\label{prop:theta_0_lambda2_and_theta_lambda1_lambda2}
	The close form relationship between
	$\pi_{\lambda_2}^{\lambda_1}(\mathbf{v}) $  and
	$\pi_{\lambda_2}^{0}(\mathbf{v})$ is
	\begin{equation}
	\pi_{\lambda_2}^{\lambda_1}(\mathbf{v})
	=
	\mbox{sign}(\pi_{\lambda_2}^{0}(\mathbf{v}))
	\odot
	\max\{|\pi_{\lambda_2}^{0}(\mathbf{v})|-\lambda_{1},0\}
	.\label{equ:corre_between_lambda1_and_lambda2}
	\end{equation}
	where $\odot$ is an element-wise product operator.
\end{prop}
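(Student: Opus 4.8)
The plan is to verify that the soft\hyp thresholded vector
$\boldsymbol{\theta}^{\star}:=\mathrm{sign}(\pi_{\lambda_2}^{0}(\mathbf{v}))\odot\max\{|\pi_{\lambda_2}^{0}(\mathbf{v})|-\lambda_1,0\}$
satisfies the first\hyp order (KKT) optimality conditions of the strongly convex program that defines $\pi_{\lambda_2}^{\lambda_1}(\mathbf{v})$; since that program has a unique minimizer, this identifies $\boldsymbol{\theta}^{\star}$ with $\pi_{\lambda_2}^{\lambda_1}(\mathbf{v})$. First I would record the optimality certificate for $\mathbf{u}^{\star}:=\pi_{\lambda_2}^{0}(\mathbf{v})$: there exists $\mathbf{s}$ with $\|\mathbf{s}\|_{\infty}\le 1$ and $s_j=\mathrm{sign}((\mathbf{D}\mathbf{u}^{\star})_j)$ whenever $(\mathbf{D}\mathbf{u}^{\star})_j\neq 0$, such that $\mathbf{v}-\mathbf{u}^{\star}=\lambda_2\mathbf{D}^{\top}\mathbf{s}$. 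I would then split $\mathbf{v}-\boldsymbol{\theta}^{\star}=(\mathbf{v}-\mathbf{u}^{\star})+(\mathbf{u}^{\star}-\boldsymbol{\theta}^{\star})=\lambda_2\mathbf{D}^{\top}\mathbf{s}+(\mathbf{u}^{\star}-\boldsymbol{\theta}^{\star})$ and argue that the two summands are, respectively, $\lambda_2$ times a subgradient of $\|\mathbf{D}\cdot\|_1$ and $\lambda_1$ times a subgradient of $\|\cdot\|_1$, both evaluated at $\boldsymbol{\theta}^{\star}$.

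The second summand is the easy, purely coordinatewise part. For each index $i$: if $|u_i^{\star}|>\lambda_1$ then $\theta_i^{\star}=u_i^{\star}-\lambda_1\,\mathrm{sign}(u_i^{\star})\neq 0$ and $u_i^{\star}-\theta_i^{\star}=\lambda_1\,\mathrm{sign}(\theta_i^{\star})$; if $|u_i^{\star}|\le\lambda_1$ then $\theta_i^{\star}=0$ and $u_i^{\star}-\theta_i^{\star}=u_i^{\star}\in[-\lambda_1,\lambda_1]$. Either way $\mathbf{u}^{\star}-\boldsymbol{\theta}^{\star}\in\lambda_1\,\partial\|\boldsymbol{\theta}^{\star}\|_1$, so it equals $\lambda_1\mathbf{g}$ for some $\mathbf{g}\in\partial\|\boldsymbol{\theta}^{\star}\|_1$. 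Combined with the certificate above this gives $\mathbf{v}-\boldsymbol{\theta}^{\star}=\lambda_1\mathbf{g}+\lambda_2\mathbf{D}^{\top}\mathbf{s}$, which is precisely the KKT relation for the two\hyp penalty problem, provided the same $\mathbf{s}$ still qualifies as a subgradient of $\|\mathbf{D}\cdot\|_1$ at $\boldsymbol{\theta}^{\star}$.

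That last proviso is the crux, and it is where I expect the real work. Concretely, one must show that whenever $(\mathbf{D}\boldsymbol{\theta}^{\star})_j\neq 0$ one has $s_j=\mathrm{sign}((\mathbf{D}\boldsymbol{\theta}^{\star})_j)$. The clean route is via monotonicity of the scalar map $S_{\lambda_1}(x)=\mathrm{sign}(x)\max\{|x|-\lambda_1,0\}$: it is nondecreasing and $1$\hyp Lipschitz, so $u_a^{\star}\ge u_b^{\star}$ implies $\theta_a^{\star}\ge\theta_b^{\star}$, whence the sign of a pairwise difference is either inherited or collapses to zero; in particular $(\mathbf{D}\mathbf{u}^{\star})_j=0$ forces $(\mathbf{D}\boldsymbol{\theta}^{\star})_j=0$ for rows of $\mathbf{D}$ that are signed differences of two standard basis vectors, and when $(\mathbf{D}\mathbf{u}^{\star})_j\neq 0$ the sign carries over to $\boldsymbol{\theta}^{\star}$. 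The bookkeeping here is complicated by the block structure $\mathbf{D}=\mathbf{D}_s\otimes\mathbf{D}_w\otimes\mathbf{D}_y$: some rows are single basis vectors (from the last row of $\mathbf{D}_y$) or simple differences, but the cross (Kronecker) rows involve four entries, so the sign argument has to be carried out row by row; the cyclic row of $\mathbf{D}_w$ makes $\mathbf{D}$ rank\hyp deficient, which conveniently leaves extra freedom in the choice of $\mathbf{s}$ that one can use to absorb the borderline cases. Alternatively, once the subproblem is recognized as a fused\hyp lasso signal approximator, one may invoke the corresponding result of \citet{liu2010efficient} directly. In either case, strong convexity of the objective defining $\pi_{\lambda_2}^{\lambda_1}(\mathbf{v})$ guarantees a unique minimizer, so verifying the KKT conditions at $\boldsymbol{\theta}^{\star}$ finishes the proof and yields \eqref{equ:corre_between_lambda1_and_lambda2}.
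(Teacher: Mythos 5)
Your overall route---verify the KKT conditions at the soft-thresholded point $\boldsymbol{\theta}^{\star}$, reuse the subgradient certificate $\mathbf{s}$ from the fused-only proximal problem, and conclude by uniqueness from strong convexity---is exactly the standard argument behind the result the paper invokes from \citet{liu2010efficient}, and it does go through whenever every row of $\mathbf{D}$ is a signed difference of two coordinates (your monotonicity argument handles those rows correctly). But your proof stops at the step you yourself identify as the crux, and that step is not mere bookkeeping: for the paper's $\mathbf{D}=\mathbf{D}_s\otimes\mathbf{D}_w\otimes\mathbf{D}_y$, every row obtained by crossing a difference row of $\mathbf{D}_w$ with a difference row of $\mathbf{D}_y$ has four nonzero entries with sign pattern $(+,-,-,+)$, and soft-thresholding does \emph{not} preserve the sign of such a row value. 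Concretely, with $\lambda_1=2.5$ and $(u_a,u_b,u_c,u_d)=(2.6,\,-2.4,\,10,\,5.1)$ one has $u_a-u_b-u_c+u_d=0.1>0$, while the thresholded values $(0.1,\,0,\,7.5,\,2.6)$ give $0.1-0-7.5+2.6=-4.8<0$. Hence the coordinate $s_j$ certified at $\mathbf{u}^{\star}=\pi_{\lambda_2}^{0}(\mathbf{v})$ need not be a valid subgradient coordinate at $\boldsymbol{\theta}^{\star}$, so the ``row by row'' sign-inheritance you plan to carry out fails for the cross rows; similarly $(\mathbf{D}\mathbf{u}^{\star})_j=0$ no longer forces $(\mathbf{D}\boldsymbol{\theta}^{\star})_j=0$ for four-entry rows, and the freedom coming from the rank deficiency of the circular $\mathbf{D}_w$ is tied down by the stationarity equation $\mathbf{v}-\mathbf{u}^{\star}=\lambda_2\mathbf{D}^{\top}\mathbf{s}$, so it cannot simply be redeployed to repair the signs.

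So there is a genuine gap: the decisive claim---that the fused-penalty subgradient remains valid at the thresholded point---is precisely what is unproven (and, absent further structure, can fail) for the Kronecker cross rows, and your fallback of invoking \citet{liu2010efficient} directly does not close it, because that result is proved for the chain fused-lasso signal approximator, i.e.\ exactly the two-entry difference rows your argument already covers. A complete proof would have to exploit additional properties of $\mathbf{u}^{\star}$ as the minimizer of the fused-only problem (not merely that $\boldsymbol{\theta}^{\star}$ is a soft-thresholding of some vector), or else restrict $\mathbf{D}$ to graph-difference form. For comparison, the paper itself supplies no argument beyond the citation to \citet{liu2010efficient}, so the extension to its Kronecker-structured $\mathbf{D}$ is the real content a proof of this proposition must provide, and your proposal leaves exactly that part open.
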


With the proximal mapping function in Proposition \ref{prop:theta_0_lambda2_and_theta_lambda1_lambda2}, we can now develop the algorithm shown in Algorithm \ref{alg:Tensor_and_FISTA}.

%

\begin{algorithm}[H]
	\caption{Iterative updating based on tensor decomposition }
	\label{alg:Tensor_and_FISTA}
	\LinesNumbered
	\KwIn{
		$\mathbf{y}^*,
		\mathbf{B}_s,
		\mathbf{B}_w,
		\mathbf{B}_y,
		\mathbf{D}_s,
		\mathbf{D}_w,
		\mathbf{D}_y,
		K, L,
		\lambda_1, \lambda_2, L_0,
		M_1, M_2$ }
	\KwOut{$
		\boldsymbol{\theta}_{h,\lambda_1,\lambda_2}$}
	\bfseries{initialization}\;  	
	$\boldsymbol{\Theta}^{(1)} = \boldsymbol{\Theta}^{(0)},
	t_{-1}=1,  t_0=1, L=L_0  $\\
	\For{$i =1 \cdots M_1$}{
		$
		\mathcal{N}^{(i)}  =
		\mathcal{N}^{(i)}  + \frac{ t_{i-2} -1 }{t_{i-1}} (\boldsymbol{\Theta}^{(i)}   -  \boldsymbol{\Theta}^{(i-1)})
		$
		\begin{eqnarray*}
			\mathcal{V} & = &\mathcal{N}^{(i)}  -
			\frac{1}{L}
			\mathcal{N}^{(i)} \times_1 (\mathbf P'_s \mathbf P_s) \times_2 (\mathbf P'_w \mathbf P_w) \times_3 (\mathbf P'_y \mathbf P_y) - \\
			& &\frac{1}{L}  \mathcal{Y}^* \times_1 \mathbf P'_s \times_2 \mathbf P'_w \times_3 \mathbf P'_y
		\end{eqnarray*}
		\For{ $j=0 \cdots M_2$ }{
			\begin{eqnarray*}
				\mathcal{G} ^{(i)} & =&
				\left(
				\mathcal{Z}^{(j)} \times_1 (\mathbf D'_s \mathbf D_s) \times_2 (\mathbf D'_w \mathbf D_w) \times_3 (\mathbf D'_y \mathbf D_y) )
				\right) -\\
				& & \left(  \mathcal{V} \times_1 \mathbf D_s \times_2 \mathbf D_w \times_3  \mathbf D_y   \right)
			\end{eqnarray*}
			$
			\mathcal{Z}^{(j+1)} = P\left( \mathcal{Z}^{(j)} - \mathcal{G}^{(j)}/L \right)
			$
		}
		$
		\pi^0_{\lambda_2}(\mathcal {V})  = \mathcal{V} - (\mathcal{Z}^{(M_2)})  \times_1 \mathbf D_s \times_2 \mathbf D_w \times_3  \mathbf D_y
		$\\
		$
		\pi_{\lambda_2}^{\lambda_1} (\mathcal V) = \mbox{sign}( \pi^0_{\lambda_2}(\mathcal V) ) \odot \mbox{max} \{ \left| \pi^0_{\lambda_2}(\mathcal V) \right| -\lambda_1, 0 \}
		$ \\
		$
		t_{i+1} = \frac{1+\sqrt{1+4t_i^2}}{2}
		$
	}
	$
	\widehat{\boldsymbol{\Theta}}_{h,\lambda_1,\lambda_{2}} = \pi_{\lambda_2}^{\lambda_1} (\mathcal V)
	$\\
	$\widehat{\boldsymbol{\theta}}_{h,\lambda_1,\lambda_{2}}=\mbox{vector}(\widehat{\boldsymbol{\Theta}}_{h,\lambda_1,\lambda_{2}})$
	${\boldsymbol{v}}=\mbox{vector}({\mathcal{V}})$
\end{algorithm}

$\mbox{vector}(\cdot)$ is a function that unfolding a order-3 tensor of dimension $n_1\times n_2 \times n_3$ into a vector $n_1n_2n_3$ .

\subsection{Computational Complexity} \label{sec:computational_complexity}

This section discusses the computational complexity of our proposed algorithm.
Suppose the raw data is structured into a tensor of order three with dimensional $n_1 \times n_2 \times n_3$, then the computation complexity of our propose method is of order $O\left( n_1n_2n_3\max\{n_1,n_2,n_3\}  \right)$ (see Proposition \ref{prop:computational_complexity}).
\begin{prop}
	\label{prop:computational_complexity}
	The computational complexity of Algorithm \ref{alg:Tensor_and_FISTA} is of order
	$
	O
	\left(
	n_1n_2n_3\max\{n_1,n_2,n_3\}
	\right)
	$.
\end{prop}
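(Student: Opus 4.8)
The plan is to track the cost of a single outer iteration of Algorithm~\ref{alg:Tensor_and_FISTA} operation-by-operation, observe that every step reduces to a sequence of mode-$n$ products between an order-$3$ tensor and a small square matrix, bound the cost of each such product, and then multiply by the (constant) iteration counts $M_1, M_2$. First I would recall the elementary fact about mode-$n$ products: if $\mathcal{A}\in\mathbb{R}^{n_1\times n_2\times n_3}$ and $\mathbf{B}\in\mathbb{R}^{n_k\times n_k}$, then forming $\mathcal{A}\times_k \mathbf{B}$ costs $O(n_1 n_2 n_3 \cdot n_k)$ flops, since each of the $n_1 n_2 n_3$ output entries is a length-$n_k$ inner product (equivalently, it is the matrix product $\mathbf{B}\,\mathcal{A}_{(k)}$ where $\mathcal{A}_{(k)}$ is the mode-$k$ unfolding of size $n_k\times (n_1 n_2 n_3/n_k)$). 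Hence a chain $\mathcal{A}\times_1\mathbf{B}_s\times_2\mathbf{B}_w\times_3\mathbf{B}_y$ costs $O\!\left(n_1n_2n_3(n_1+n_2+n_3)\right)=O\!\left(n_1n_2n_3\max\{n_1,n_2,n_3\}\right)$, which is exactly the target bound. The matrices $\mathbf{B}_s,\mathbf{B}_w,\mathbf{B}_y$ (sizes $n_1\times n_1$, $n_2\times n_2$, $n_3\times n_3$) and the difference matrices $\mathbf{D}_s,\mathbf{D}_w,\mathbf{D}_y$ (same sizes, in fact sparse) all have this property, and their Gram matrices $\mathbf{P}_s'\mathbf{P}_s$, etc., are precomputed once outside the loop.

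Next I would walk through the body of the \texttt{for} loop at line~3. The extrapolation step $\mathcal{N}^{(i)}\leftarrow\mathcal{N}^{(i)}+\frac{t_{i-2}-1}{t_{i-1}}(\boldsymbol{\Theta}^{(i)}-\boldsymbol{\Theta}^{(i-1)})$ is an axpy on tensors with $n_1 n_2 n_3$ entries, costing $O(n_1 n_2 n_3)$. The computation of $\mathcal{V}$ involves two mode-product chains, $\mathcal{N}^{(i)}\times_1(\mathbf{P}_s'\mathbf{P}_s)\times_2(\mathbf{P}_w'\mathbf{P}_w)\times_3(\mathbf{P}_y'\mathbf{P}_y)$ and $\mathcal{Y}^*\times_1\mathbf{P}_s'\times_2\mathbf{P}_w'\times_3\mathbf{P}_y'$, each of which is $O(n_1n_2n_3\max\{n_1,n_2,n_3\})$ by the lemma above, plus $O(n_1n_2n_3)$ for the additions. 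The inner \texttt{for} loop at line~7 (the SFA / gradient-descent subroutine computing $\pi^0_{\lambda_2}$) runs $M_2$ times; each pass forms $\mathcal{G}^{(j)}$ via two more mode-product chains — again $O(n_1n_2n_3\max\{n_1,n_2,n_3\})$ each — followed by the projection $P(\cdot)$ and an axpy, both $O(n_1n_2n_3)$. So the inner loop costs $O\!\left(M_2\, n_1n_2n_3\max\{n_1,n_2,n_3\}\right)$. Forming $\pi^0_{\lambda_2}(\mathcal{V})$ from $\mathcal{Z}^{(M_2)}$ is one more mode chain plus a subtraction; applying Proposition~\ref{prop:theta_0_lambda2_and_theta_lambda1_lambda2} to get $\pi^{\lambda_1}_{\lambda_2}(\mathcal{V})$ is an element-wise sign / soft-threshold, $O(n_1n_2n_3)$; and updating $t_{i+1}$ is $O(1)$. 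Summing, one outer iteration is $O\!\left((1+M_2)\,n_1n_2n_3\max\{n_1,n_2,n_3\}\right)$, and $M_1$ outer iterations give $O\!\left(M_1(1+M_2)\,n_1n_2n_3\max\{n_1,n_2,n_3\}\right)$. Treating $M_1, M_2$ as constants (fixed iteration budgets independent of the tensor dimensions) collapses this to the claimed $O\!\left(n_1n_2n_3\max\{n_1,n_2,n_3\}\right)$. I should also check the one-time setup — computing $\mathbf{H}_m$, $\mathbf{y}^*$, $\mathbf{X}$, and the Gram matrices via Proposition~1 — and argue, using the Kronecker structure $\mathbf{B}_m=\mathbf{B}_{m,s}\otimes\mathbf{B}_{m,w}\otimes\mathbf{B}_{m,y}$, that these too cost no more than $O(n_1n_2n_3\max\{n_1,n_2,n_3\})$ rather than the naive $O((n_1n_2n_3)^3)$ one would pay by forming the full $n_1n_2n_3\times n_1n_2n_3$ matrices.

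The main obstacle, and the place the argument must be careful rather than routine, is precisely this last point: one must never materialize $\mathbf{B}_m$, $\mathbf{B}_h$, $\mathbf{X}$, or $\mathbf{H}_m$ as dense $n_1n_2n_3\times n_1n_2n_3$ objects — doing so would blow the bound by a factor of $(n_1n_2n_3)^2$. The key is that every place the raw algorithm (FISTA) would multiply by such a matrix, the Kronecker identity $(\mathbf{C}\otimes\mathbf{B}\otimes\mathbf{A})\,\mathrm{vec}(\mathcal{T})=\mathrm{vec}(\mathcal{T}\times_1\mathbf{A}\times_2\mathbf{B}\times_3\mathbf{C})$ lets us replace that matrix–vector product by a chain of three small mode-$n$ products, which is what the tensor-algebra reformulation in Algorithm~\ref{alg:Tensor_and_FISTA} accomplishes. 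I would therefore state and use this identity explicitly, verify that each line of the algorithm is an instance of it (including the projection/Hat-matrix step, where the projector's Kronecker factorization $\mathbf{H}_m=\mathbf{H}_{m,s}\otimes\mathbf{H}_{m,w}\otimes\mathbf{H}_{m,y}$ must be invoked), and only then read off the flop count. A secondary, purely bookkeeping point is that the $\mathbf{D}$-matrices are bidiagonal/cyclic, so their mode products are in fact $O(n_1n_2n_3)$, not $O(n_1n_2n_3\max\{n_1,n_2,n_3\})$ — this only helps, so it does not affect the stated upper bound, but it is worth a remark. With the Kronecker-to-mode-product translation in hand, the rest is the straightforward summation sketched above.
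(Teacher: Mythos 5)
Your proposal is correct and follows essentially the same route as the paper's own proof: bound each mode-$n$ product chain by unfolding it into a small matrix--matrix multiplication of cost $O(n_1n_2n_3\,n_k)$, conclude each chain costs $O(n_1n_2n_3\max\{n_1,n_2,n_3\})$, and sum over the lines of Algorithm~\ref{alg:Tensor_and_FISTA}. You are in fact somewhat more careful than the paper, which only works out line~4 in detail, treats the iteration counts implicitly as constants, and does not discuss the one-time setup of $\mathbf{y}^*$ via the Kronecker-factored projector $\mathbf{H}_m$ — points you handle explicitly and correctly.
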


\begin{proof}
	The main computational load in Algorithm \ref{alg:Tensor_and_FISTA} is on the calculation of $\mathbf{v}$ (line 4), $\mathbf{g}^{(i)}$(line 5) and $\pi_{\lambda_2}^{0} (\mathbf v)$ (line 7).
	We will take the calculation of $\mathbf{v}$ in line 4 in the algorithm as an example.
	To begin with, we focus on the computational complexity of
	\begin{equation}
	\label{equ:computional_complexity_proof_part1}
	\mathcal{N}^{(i)} \times_1 (\mathbf P'_s \mathbf P_s) \times_2 (\mathbf P'_w \mathbf P_w) \times_3 (\mathbf P'_y \mathbf P_y) ).
	\end{equation}
	For better illustration, we denote $\mbox{tensor}(\boldsymbol{\eta}^{(i)})$ as $\mathcal{N}^{(i)}$ and $\mathcal{N}^{(i)} \times_1 (\mathbf P'_s \mathbf P_s)$ as tensor $\mathcal L_1$.
	According to the tensor algebra \citep[Section 2.5]{TensorAlgebra},
	$$
	\mathcal L_1 = \mathcal{N}^{(i)} \times_1 (\mathbf P'_s \mathbf P_s)
	\Longleftrightarrow
	\mathcal L_{1(1)} = \mathbf P'_s \mathbf P_s \mathcal N^{(i)}_{(1)}.
	$$
	Therefore, the computational complexity of equation \eqref{equ:computional_complexity_proof_part1} is the same as two-matrix multiplication with order $n_1 \times n_1$ and $n_1 \times n_1n_2$, which is of order $O\left(  n_1n_2n_3(2n_1-1) \right)$.
	
	After the calculation of $\mathcal L_1$, equation \eqref{equ:computional_complexity_proof_part1} is reduced to
	\begin{equation}
	\label{equ:computional_complexity_proof_part2}
	\mathcal L_1  \times_2 (\mathbf P'_w \mathbf P_w) \times_3 (\mathbf P'_y \mathbf P_y) ).
	\end{equation}
	Similarly, denotes  $\mathcal L_2 = \mathcal L_1  \times_2 (\mathbf P'_w \mathbf P_w)$, then
	$$
	\mathcal L_2 = \mathcal L_1 \times_2 (\mathbf P'_w \mathbf P_w)
	\Longleftrightarrow
	\mathcal L_{2(2)} = \mathbf P'_w \mathbf P_w \mathcal N_{(2)}.
	$$
	Therefore, the computational complexity of equation \eqref{equ:computional_complexity_proof_part2} is the same as two-matrix multiplication with order $n_2 \times n_2$ and $n_2 \times n_1n_3$, which is of order $O\left(  n_1n_2n_3(2n_2-1) \right)$.
	
	After the calculation of $\mathcal L_2$, equation \eqref{equ:computional_complexity_proof_part2} is reduced to
	\begin{equation}
	\label{equ:computional_complexity_proof_part3}
	\mathcal L_2   \times_3 (\mathbf P'_y \mathbf P_y) ).
	\end{equation}
	Similarly, denotes  $\mathcal L_3 = \mathcal L_2  \times_2 (\mathbf P'_y \mathbf P_y)$, then
	$$
	\mathcal L_3 = \mathcal L_2 \times_3 (\mathbf P'_y\mathbf P_y)
	\Longleftrightarrow
	\mathcal L_{3(3)} = \mathbf P'_w \mathbf P_w \mathcal N_{(3)}.
	$$
	Therefore, the computational complexity of equation \eqref{equ:computional_complexity_proof_part2} is the same as two-matrix multiplication with order $n_3 \times n_3$ and $n_3 \times n_1n_2$, which is of order $O\left(  n_1n_2n_3(2n_3-1) \right)$.
	
	By combining all these blocks built above, we conclude that the computational complexity of
	equation \eqref{equ:computional_complexity_proof_part1} is of order $O(n_1n_2n_3\left(\max \{n_1,n_2,n_3\} \right))$.
	
	In the same way, the computational complexity in line 5 and 7 of Algorithm \ref{alg:Tensor_and_FISTA} is also of order $O(n_1n_2n_3\left(\max \{n_1,n_2,n_3\} \right))$. Thus, the computational complexity of Algorithm is of order $O(n_1n_2n_3\left(\max \{n_1,n_2,n_3\} \right))$.
\end{proof}

\section{Simulation}
\label{sec:simulation}

In this section, we conduct simulation studies to evaluate our proposed methodologies by comparing with several benchmark methods in the literature.
The structure of this section is as follows.
We first present the data generation mechanism for our simulations in Subsection \ref{sec:sim_data_generation},
then discuss the performance of hot-spot detection and localization  in Subsection \ref{sec:hot-spot_detection_performance}.

\subsection{Generative Model in Simulation}
\label{sec:sim_data_generation}

In our simulation, at each time index $t (t=1 \cdots T)$,  we generate a vector $\mathbf y_t$ of length $n_{1} n_{2} $ by
\begin{equation}
\label{equ:sim_data_generation}
\mathbf y_{i,t}=(\mathbf B\boldsymbol\theta_t)_i+\delta\mathbbm 1\{t\geq \tau\} \mathbbm 1_i\{i \in S_h\}+\mathbf w_{i,t},
\end{equation}
where  $\mathbf y_{i,t}$ denotes the $i$-th entry in vector  $\mathbf y_t$, $(\mathbf{B}\boldsymbol{\theta}_t)_i$ denotes the $i$-th entry in vector $\mathbf B\boldsymbol\theta_t$, and $\delta$ denotes the change magnitude.
Here  $\mathbbm 1(A)$ is the indicator function, which has the value 1 for all elements of $A$ and  the value 0 for all elements  not in $A$, and $\mathbf w_{i,t}$ is the $i$-th entry in the white noise vector whose entries are independent and follow  $N(0,0.1^{2})$ distribution.

Next, after the temporal detection of  hot-spots, we need to further localize the hot-spots in the sense that we need to find out which state and which week may lead to the occurrence of temporal hot-spot.
Because the baseline methods, PCA and T2, can only realize the detection of temporal changes, we only show the localization of spatial hot-spot by SSR-Tensor, SSD \citep{SSD}, ZQ lasso \citep{LASSO}.
For the anomaly setup,  $\mathbbm 1\{t\geq \tau\} $  indicates that the spatial hot-spots only occur after the temporal hot-spot $\tau$.
This ensures that the simulated hot-spot is temporal consistent.
The second indicator function $\mathbbm 1_i\{i \in S_h\}$ shows that only those entries whose location index belongs set $S_h$ are assigned as local hot-spots.
This ensures that the simulated hot-spot is sparse.  Here we assume the change happens at  $ \tau = 50$ among total $T=100$ years.
And the spatial hot-spots index set  is formed by the combination of states Conn, Ohio, West Va, Tex, Hawaii and week from 1-10 and 41-51.

To match the dimension in the case study, we choose $n_{1}=50,n_{2}=51$.
As for the three terms on the right side of equation  \eqref{equ:sim_data_generation}, they serve for the global trend mean, local sparse anomaly and white noise respectively.
In our simulation, the matrix $\mathbf{B}$ is $\mathbf{B}_{m,s} \otimes \mathbf{B}_{m,w} \otimes \mathbf{B}_{m,y}$ with the same choice as that in Section \ref{sec:model}.

Besides, in each of these two scenarios, we further consider two sub-cases, depending on the value of change magnitude $\delta$
in equation \eqref{equ:sim_data_generation}: one is $\delta = 0.1$ (small shift) and the other is $\delta=0.5$ (large shift).

\subsection{Hot-spot Detection Performance}
\label{sec:hot-spot_detection_performance}

In this section, we compare the performance of our proposed method (denoted as `SSR-tensor') for detection of hot-spot with some benchmark methods.
Specifically, we compare our proposed method with
Hotelling $T^{2}$ control chart \citep{T2} (denoted  as `T2'),
LASSO-based control chart proposed by \cite{LASSO} (denoted as `ZQ LASSO'),
PCA-based control chart proposed by \cite{PCA} (denoted as `PCA')
and SSD proposed by \citet{SSD} (denoted as `SSD').
Note that there are two main differences between our SSR-tensor method and the SSD method in \citet{SSD}.
First, SSR-Tensor has the autoregressive or fussed LASSO penalty in equation \eqref{equ:eatimation} so as to
ensure the temporal continuity of the hot-spot.
Second, SSD uses the Shewhart control chart to monitor temporal changes, while SSR-Tensor utilizes CUSUM instead, which is more sensitive for a small shift.

For the basis choices of  our proposed method, to model the spatial structure of the global trend, we choose $\mathbf{B}_{m,1}$ as the kernel matrix to describe the smoothness of the background, whose $(i,j)$ entry is of value $\exp\{-d^2/(2c^2)\}$ where $d$ is the distance between the $i$-th state and $j$-th state and $c$ is the bandwidth chosen by cross-validation.
In addition, we choose identical matrices for the yearly basis and weekly basis since we do not have any prior information.
Moreover, we use the identity matrix for the spatial and temporal basis of the hot-spots.
For SSD in \citet{SSD}, we will use the same spatial and temporal basis in order to have a fair comparison.

For evaluation, we will compute the following four criteria:
(i) precision, defined as the proportion of detected anomalies that are true hot-spots;
(ii) recall, defined as the proportion of the anomalies that are correctly identified;
(iii) F measure, a single criterion that combines the precision and recall by calculating
their harmonic mean; and
(iv) the corresponding average run length ($\mbox{ARL}_1$), a measure on the average detection delay in the special scenario when the change occurs at time $t=1$.
All  simulation results below are based on $1000$ Monte Carlo simulation replications.

Table \ref{table:simulation_hotspot_detection} shows the merits of our methodology  mainly lies on the higher precision and shorter $\mbox{ARL}_1$. For example, when the shift is very small, i.e., $\delta=0.1$, the $\mbox{ARL}_1$ of our SSR-Tensor method is only 1.6420 compared with 7.4970 of SSD and 9.5890 of ZQ-LASSO.
The reason for SSR-Tensor has shorter $\mbox{ARL}_1$ than that of SSD is that, SSD use Shewhart control chart to detect temporal changes, which make it insensitive for a small shift.
While for SSR-Tensor, it applies the CUSUM control chart, which is capable to detect the shift of small size.
The reason for both SSR-Tensor and SSD have shorter $\mbox{ARL}_1$ than that of ZQ-LASSO, PCA and T2 is that ZQ-LASSO fails to capture the global trend mean.
Yet, the data generated in our simulation has both decreasing and circular  global trend, which makes it hard for ZQ-LASSO to model well.

\begin{table}[t]
	\scriptsize
	\centering
	\begin{tabular}{c|cccc|cccc}
		\hline
		\multicolumn{1}{c|}{ \multirow{2}{*}{methods} } & \multicolumn{4}{c|}{small shift $\delta=0.1$} & \multicolumn{4}{c}{large shift $\delta=0.5$}\\
		\cline{2-9}
		\multicolumn{1}{c|}{} & precision & recall & F measure & ARL & precision & recall & F measure & ARL
		\\
		\hline
		SSR-tensor
		&\bf{0.0824} &\bf{0.9609} &\bf{0.5217} &\bf{1.6420}
		&\bf{0.0822} &\bf{0.9633} &\bf{0.5228} &\bf{1.0002}  \\
		& (0.0025) & (0.0536) & (0.0270) &(0.7214)
		& (0.0022) & (0.0549) & (0.0277) & (0.0144)  \\
		SSD
		&0.0404 &0.9820 &0.5112 &7.4970
		&0.0412 &1.0000 &0.5206 &1.0000 \\
		&(0.0055) & (0.1330) & (0.0692) &(9.4839)
		&(0.0000)&(0.0000) &(0.0000) &(0.0000) \\
		ZQ LASSO
		&0.0412 &1.000  & 0.5206 & 9.5890
		&0.0412 &1.0000 & 0.5206 & 8.8562 \\
		&(0.0000) & (0.0000) & (0.0000) &(7.5414)
		&(0.0000) & (0.0000) & (0.0000) &(7.1169)\\
		PCA
		& - & - & - & 28.7060
		& - & - & - & 32.0469\\
		& - & - & - &(16.9222)
		& - & - & - &(17.4660) \\
		T2
		& - & - & - & 50.0000
		& - & - & - & 50.0000\\
		& - & - & - &(0.0000)
		& - & - & - &(0.0000)\\
		\hline
	\end{tabular}
	\caption{Scenario 1 (decreasing global trend): Comparison of hot-spot detection under small shift and large shift }
	\label{table:simulation_hotspot_detection}
\end{table}

\section{Case Study} \label{sec:case_study}

In this section, we apply our proposed SSR-tensor model and hot-spot detection/localization method to the weekly gonorrhea dataset in
Section \refeq{sec:data}. For the purpose of comparison, we also consider other benchmark methods mentioned in Section \ref{sec:simulation}), and consider two performance criteria: one is the temporal detection of  hot-spots (i.e., which year it occurs) and the other is the localization of the hot-spots (i.e., which state and which week might involve the alarm).

\subsection{When the temporal changes happen?}

Here we consider the performance on the temporal detection of  hot-spots of our proposed method and other benchmark methods. For our proposed SSR-Tensor method, we build a CUSUM control chat utilizing the test statistic in Subsection
\ref{sec:Temporal_Detection}, which is shown in Figure \ref{fig:control_chart_CUSUM}.
From this plot, we can see that the hot-spots are detected at $10$-th year, i.e., 2016.

\begin{figure}[t]
	\centering
	\includegraphics[width=0.5\textwidth]{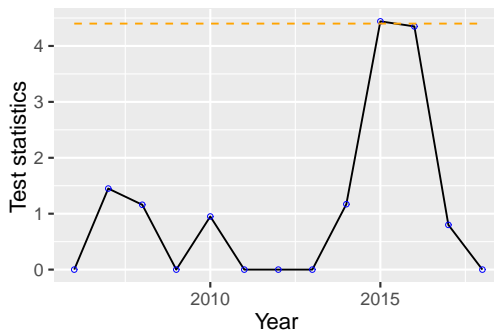}
	\caption{CUSUM Control chart of gonorrhea dataset during years 2006-2018.
	\label{fig:control_chart_CUSUM}}
\end{figure}

For the purpose of comparison, we also apply the benchmark methods, SSD \citep{SSD}, ZQ LASSO \citep{LASSO}, PCA \citep{PCA} and T2\citep{T2}, into the gonorrhea dataset.
Unfortunately, all benchmark methods are unable to raise any alarms, but our proposed SSR-tensor method raises the first hot-spot alarm in year $2016.$

\subsection{Which state and week the spatial hot-spots occur?}

Next, after the temporal detection of  hot-spots, we need to further localize the hot-spots in the sense that we need to find out which state and which week may lead to the occurrence of temporal hot-spot.
Because the baseline methods, SSD, ZQ-LASSO, PCA, and T2, can only realize the detection of temporal changes, we only show the localization of spatial hot-spot by SSR-Tensor, which is visualized in Figure \ref{fig:hot-spot_map_representative}.
\begin{figure}[t]
	\begin{tabular}{ccccc}
		\centering
		\includegraphics[width = 0.19\textwidth]{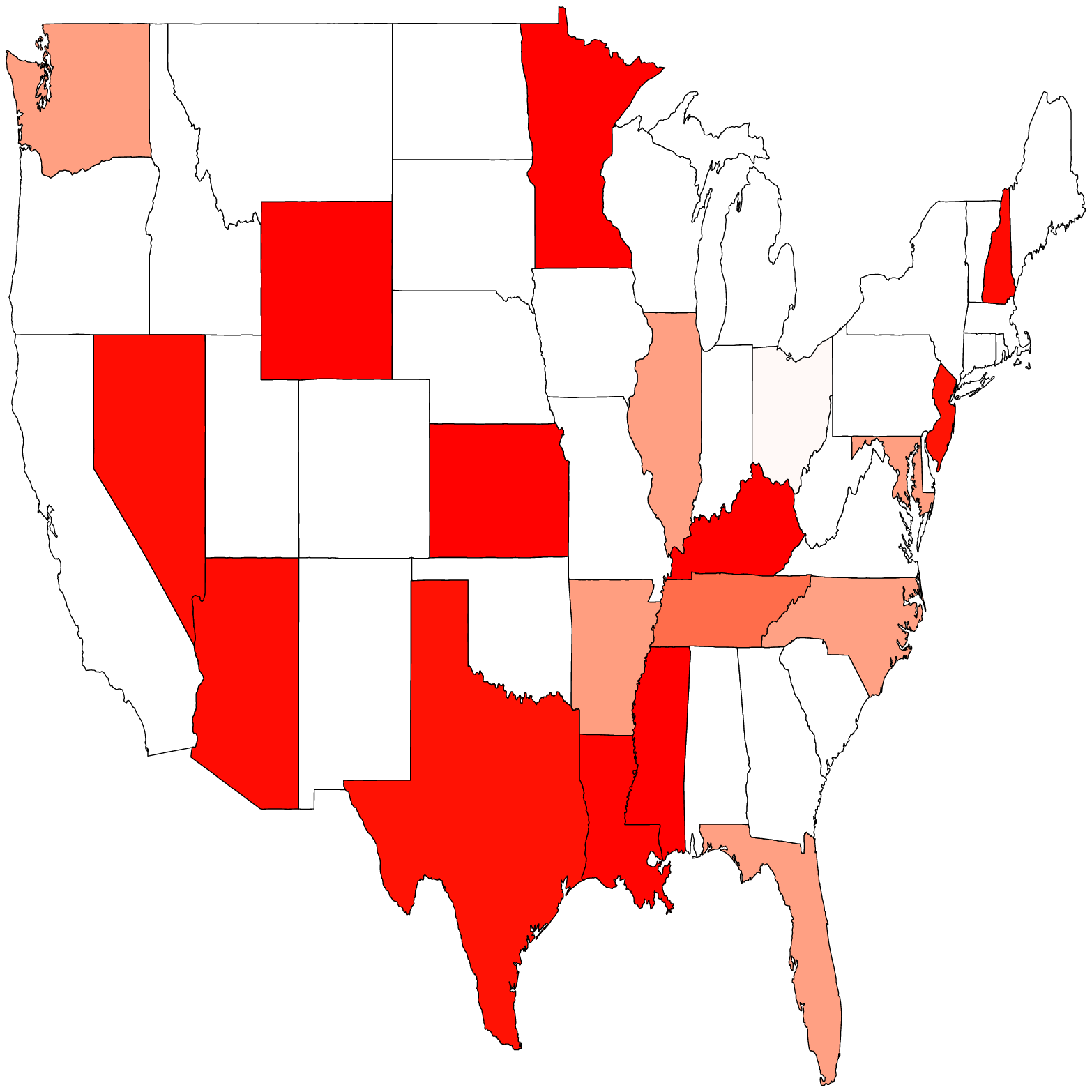}   &
		\includegraphics[width = 0.19\textwidth]{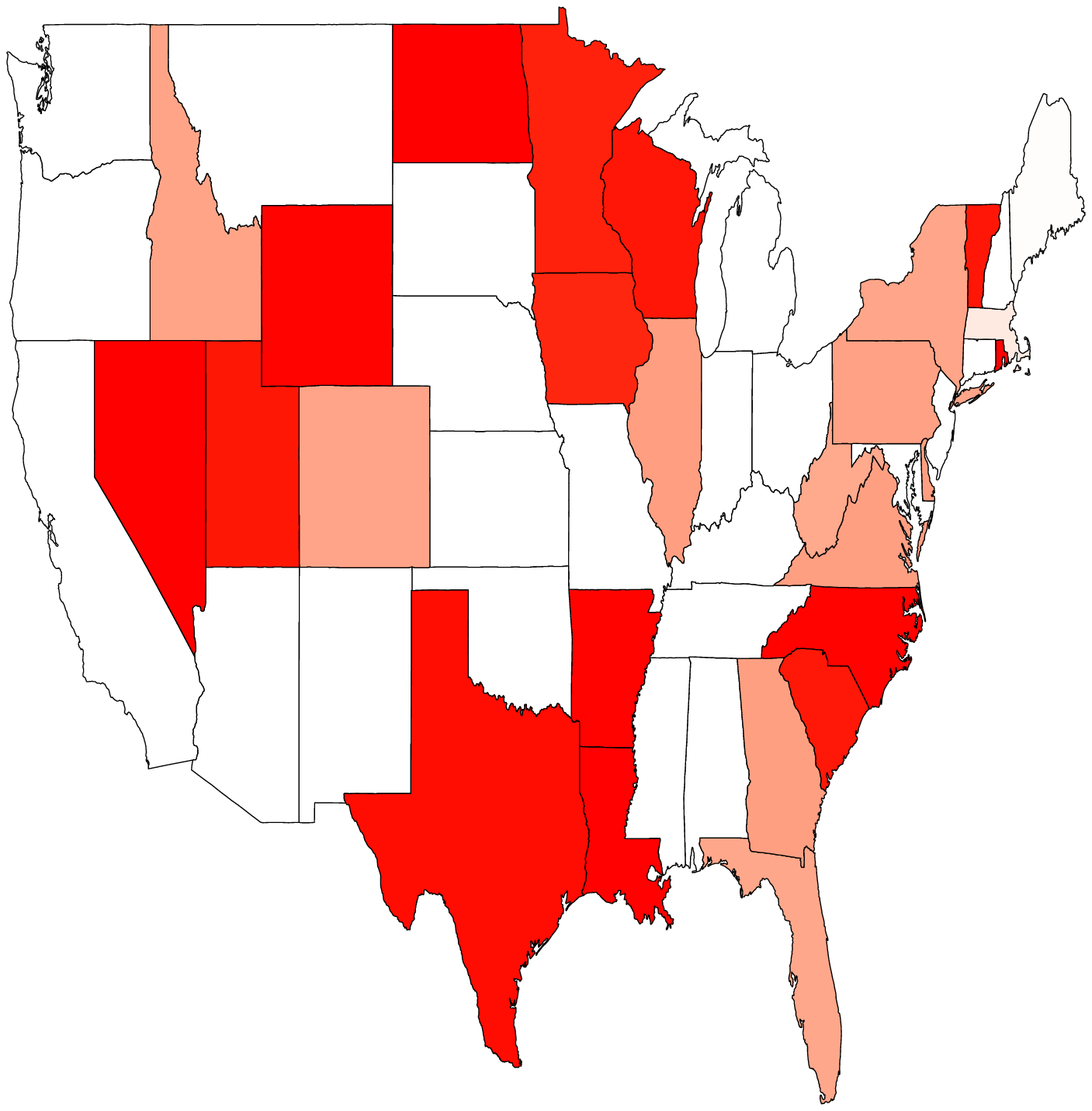}  &
		\includegraphics[width = 0.19\textwidth]{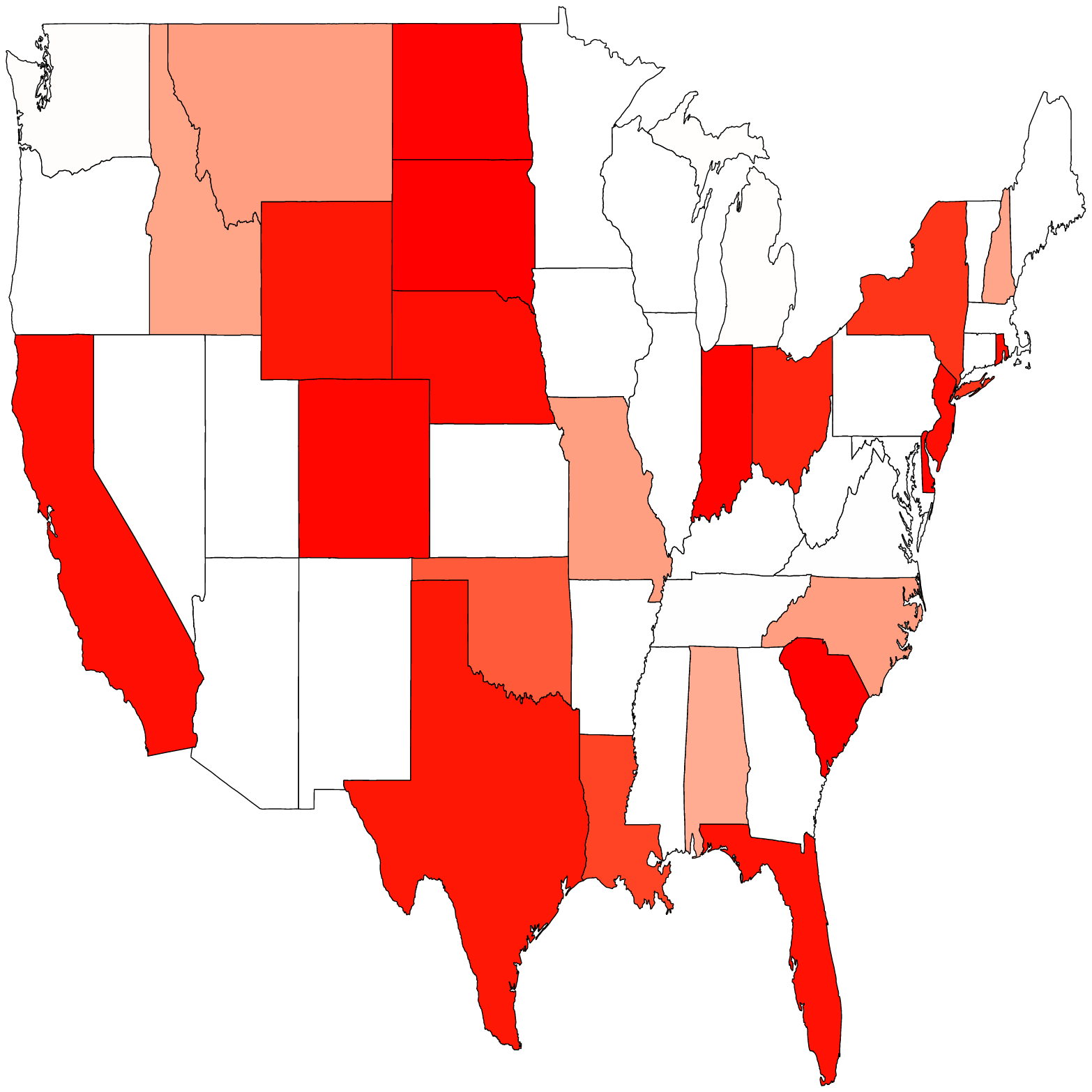}  &
		\includegraphics[width = 0.19\textwidth]{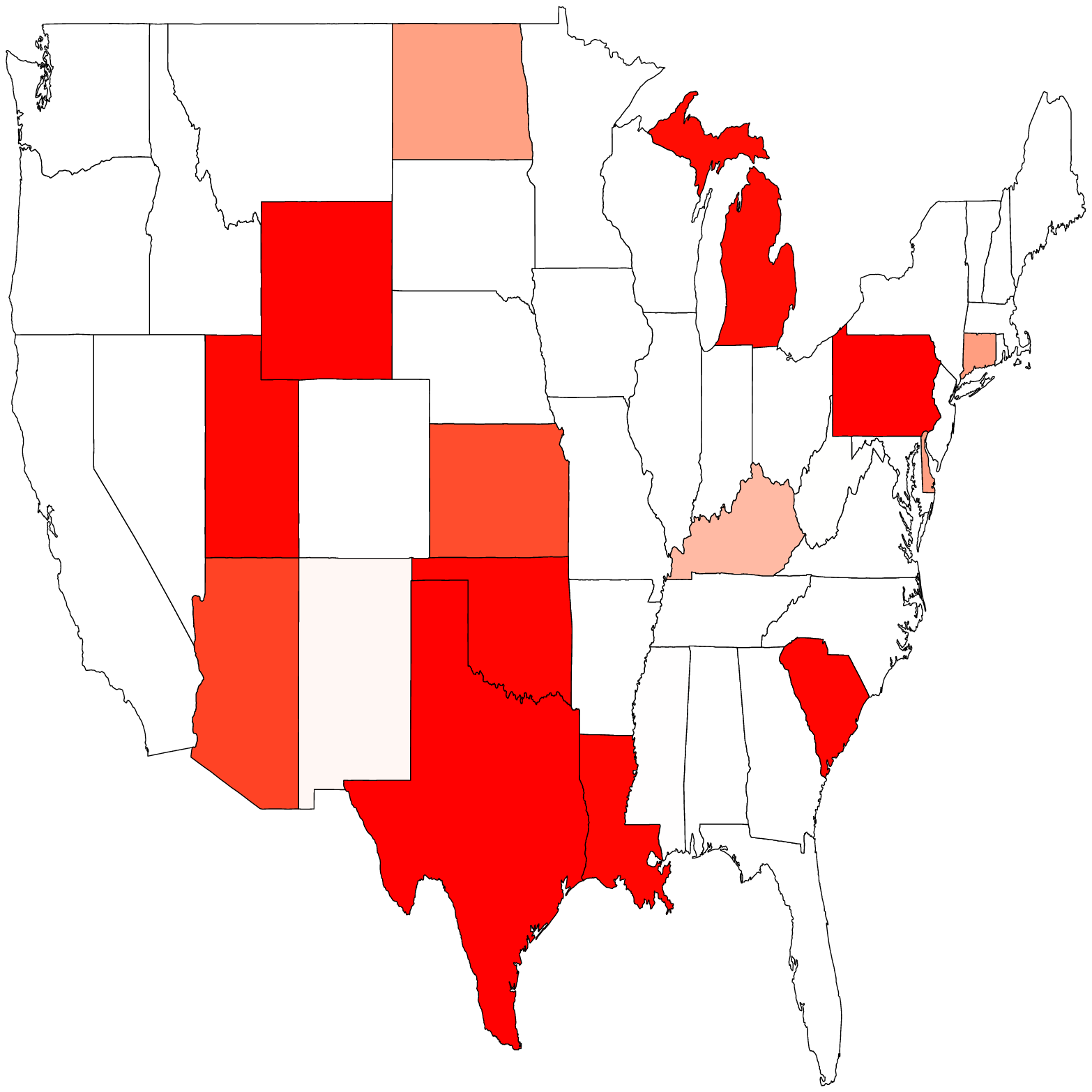}  &
		\includegraphics[width = 0.19\textwidth]{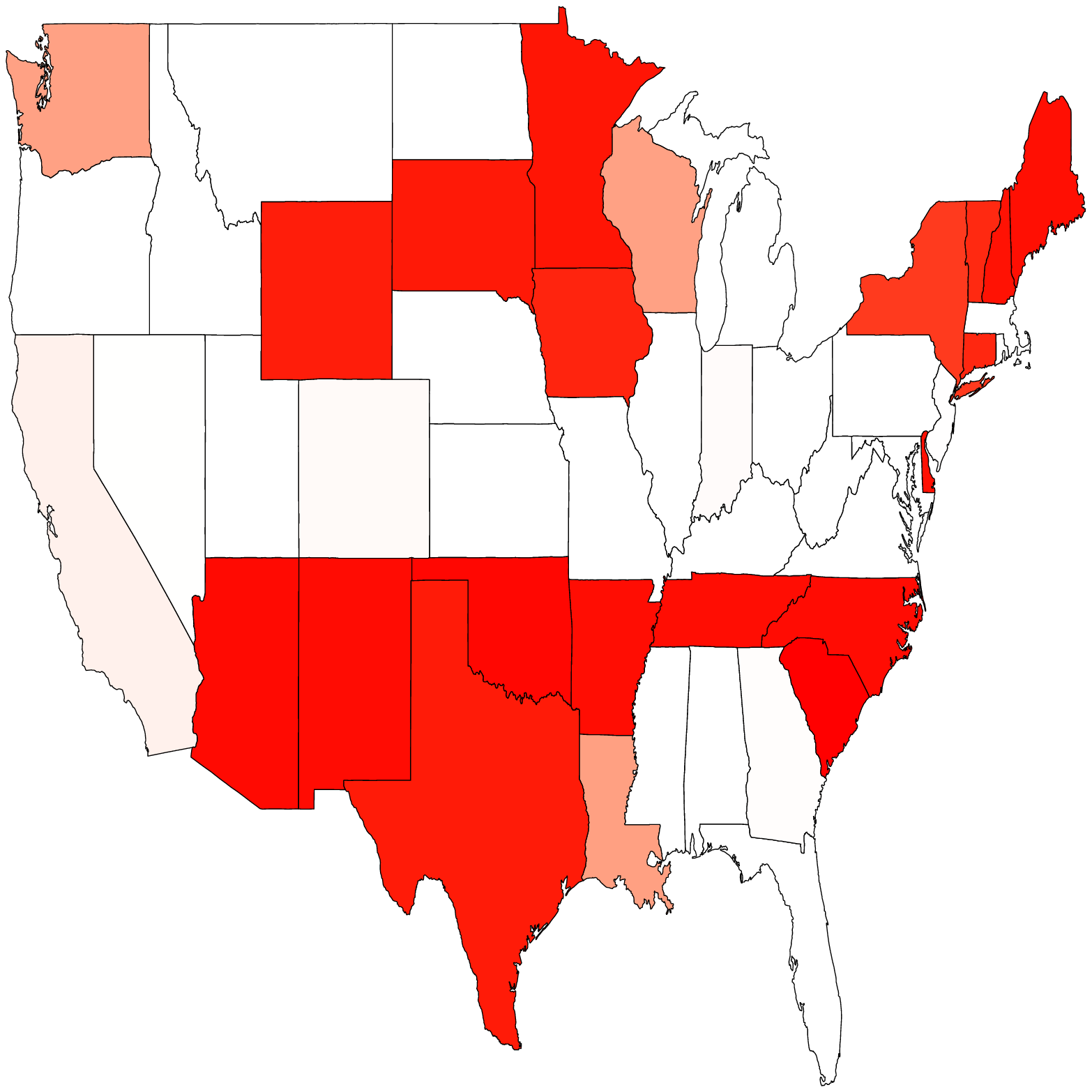} \\
		week 8 & week 19  & week 30 & week 42& week 51
	\end{tabular}
	\caption{Hot-spot detection result of circular pattern of W.S. CENTRAL(Arkansas, Louisiana, Oklahoma, Texas)
		\label{fig:hot-spot_map_representative}	}
\end{figure}

There are some circular patterns in specific areas.
For example, CENTRAL(Ark, La, Okla, Tex) tends to have a circular pattern every $11$ weeks, which is shown
in Figure \ref{fig:hot-spot_map_representative} .
Besides, there are also some circular pattern for a certain state, for instance, Kansas has the bi-weekly pattern as shown in
Figure \ref{fig:acf_and_time_series_for_bi-weekly_pattern}.
To validate the bi-weekly circular pattern of Kansas, we plot the time series plot of Kansas in 2016 as well as the auto-correlation function plot in Figure \ref{fig:hot-spot_map_representative}.
Besides, the auto-correlation function plot in the left panel of
Figure \ref{fig:acf_and_time_series_for_bi-weekly_pattern} serves as a baseline.
It can be seen from the middle and right plot of
Figure \ref{fig:acf_and_time_series_for_bi-weekly_pattern} that, Kansas has some bi-weekly or tri-weekly circular pattern.
\begin{figure}
\centering
	\begin{tabular}{ccc}
		\includegraphics[width = 0.33 \textwidth]{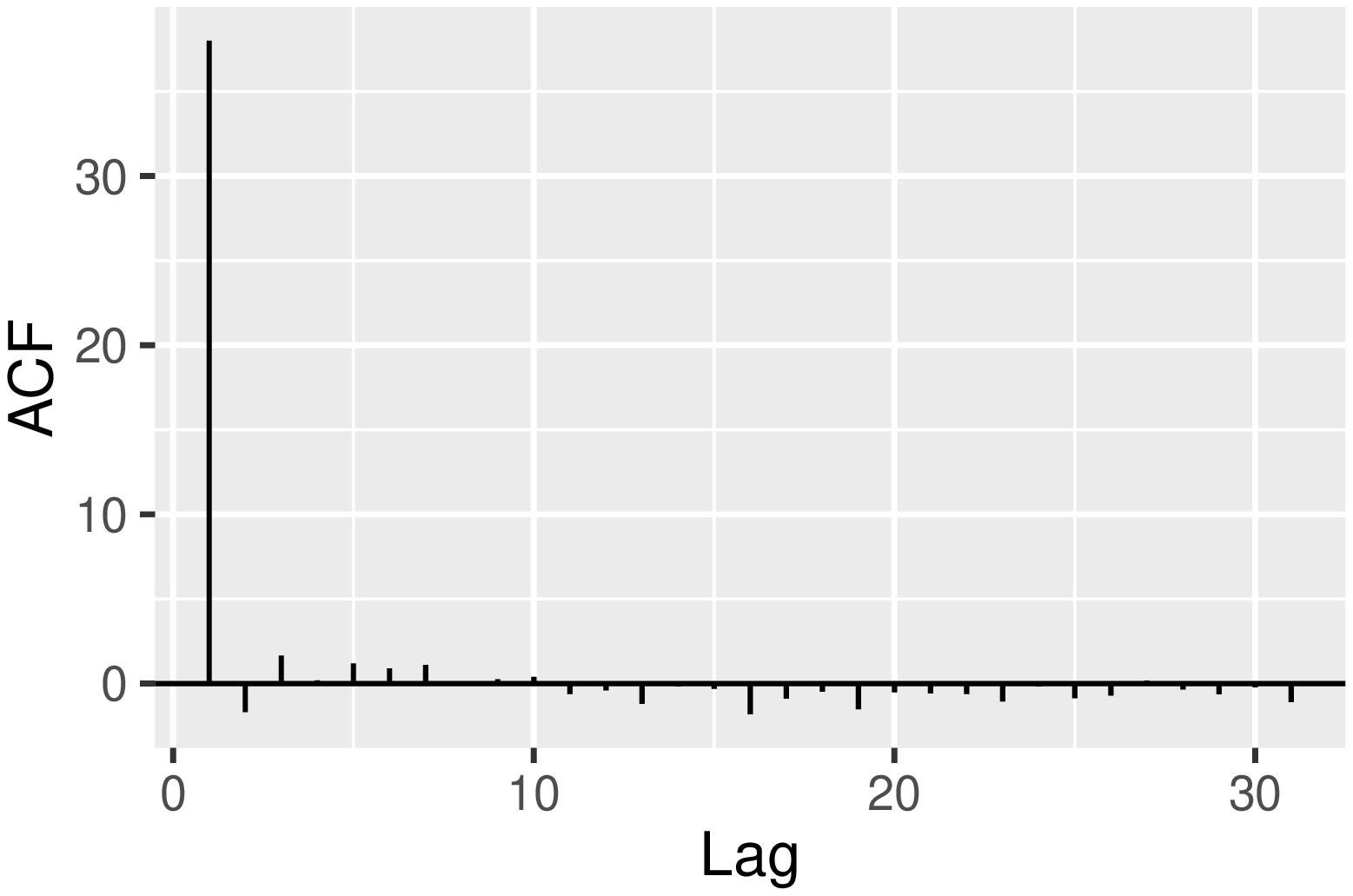} &
		\includegraphics[width = 0.33 \textwidth]{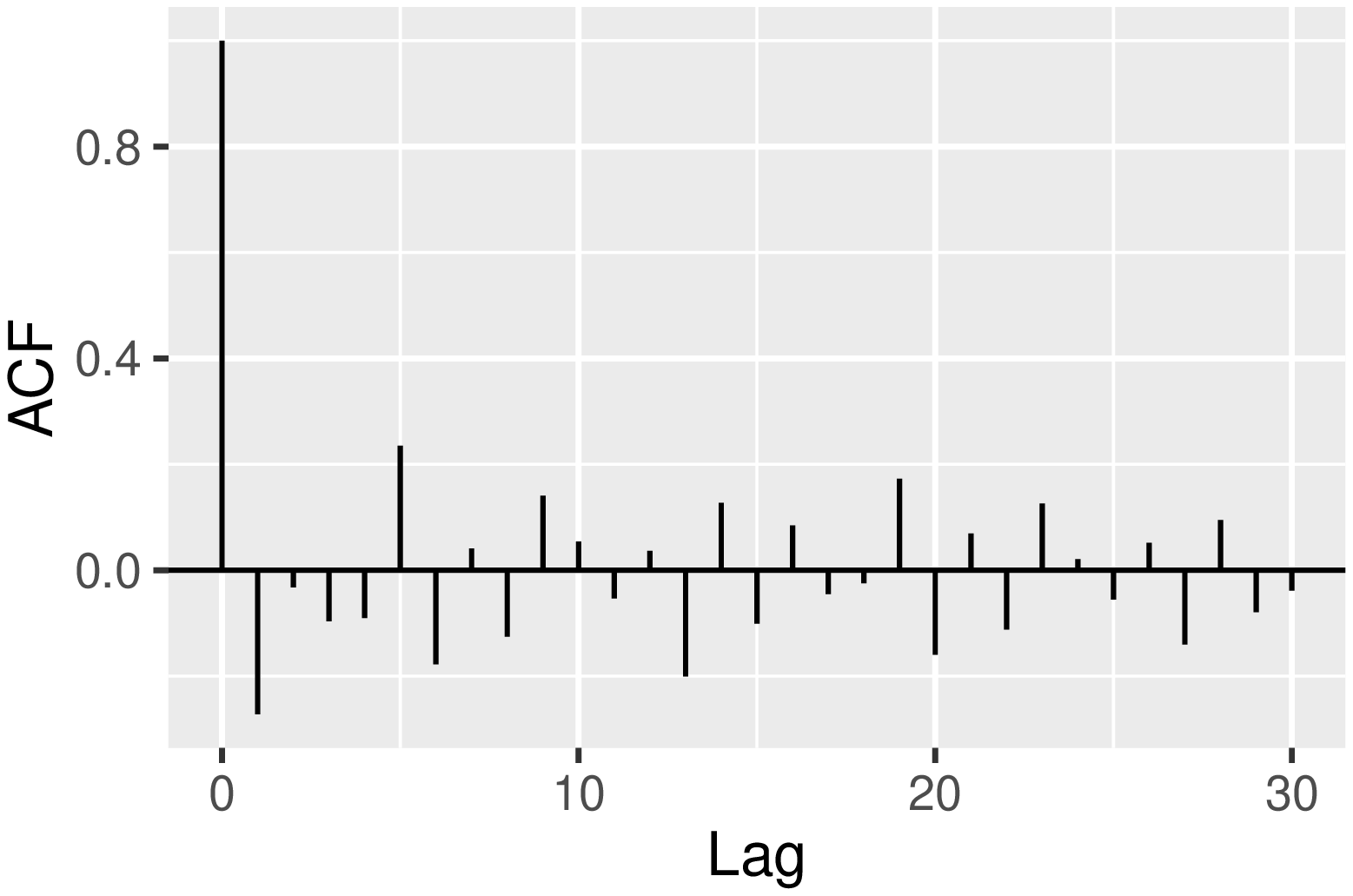} &
		\includegraphics[width = 0.33 \textwidth]{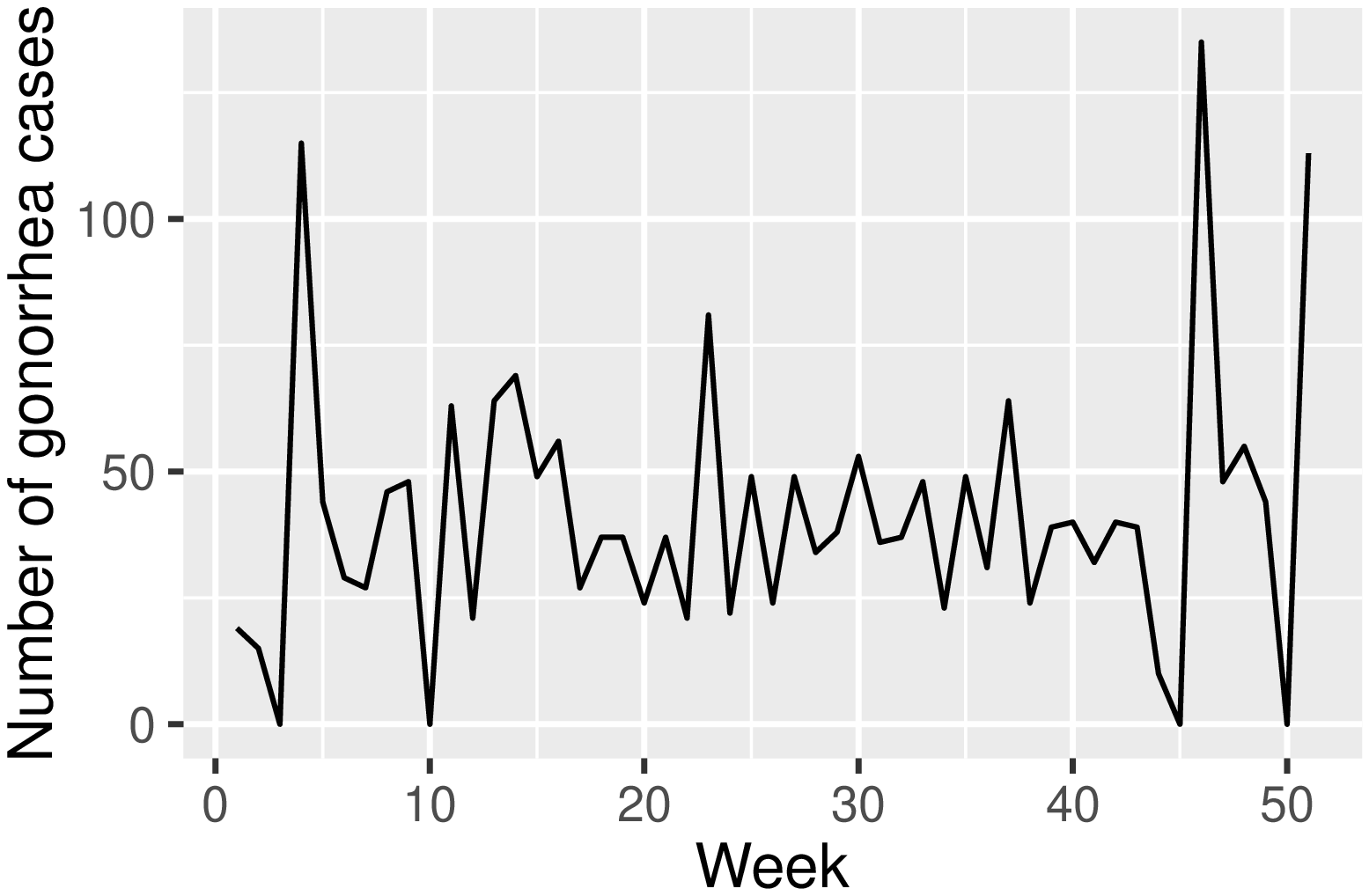}
	\end{tabular}
	\caption{Auto-correlation of all US (left) \& Kans.(middle) in 2016 and time series plot of Kansas in 2016 (right)
		\label{fig:acf_and_time_series_for_bi-weekly_pattern} }
\end{figure}

\newpage
\bibliographystyle{apalike}
\bibliography{reference_genorrhea}

\end{document}